\documentclass[a4paper,onecolumn,final]{comnet-plain}

\usepackage{amsmath}
\usepackage{amssymb,amsthm}
\usepackage{graphicx}
\usepackage{color}

\usepackage{booktabs}
\usepackage{enumitem}

\usepackage{hyperref}
\usepackage{xcolor}
\hypersetup{
    colorlinks,
    linkcolor={red!80!black},
    citecolor={blue!80!black},
    urlcolor={blue!80!black}
}

\usepackage[OT1]{fontenc} 

\usepackage{subcaption}

\newcommand{\Dt}{$\Delta t$}
\newcommand{\mDt}{\Delta t}

\pretolerance=10000
\tolerance=2000 
\emergencystretch=10pt
\binoppenalty=\maxdimen
\relpenalty=\maxdimen

\pretolerance=10000
\tolerance=2000 
\emergencystretch=10pt
\binoppenalty=\maxdimen
\relpenalty=\maxdimen

\bibpunct{[}{]}{,}{n}{,}{;} 

\title{The Temporal Event Graph}

\shorttitle{The Temporal Event Graph} 
\shortauthorlist{A. Mellor} 

\author{
\name{Andrew Mellor$^*$}
\address{Department of Applied Mathematics, School of Mathematics, \\ University of Leeds, Leeds LS2 9JT, U.K. \email{$^*$mmasm@leeds.ac.uk}}
}

\begin{document}

\maketitle

\begin{abstract}
	{
	Temporal networks are increasingly being used to model the interactions of complex systems. 
	Most studies require the temporal aggregation of edges (or events) into discrete time steps to perform analysis.
	In this article we describe a static, lossless, and unique representation of a temporal network, the \emph{temporal event graph} (TEG).
	The TEG describes the temporal network in terms of both the inter-event time and two-event temporal motif distributions.
	By considering these distributions in unison we provide a new method to characterise the behaviour of individuals and collectives in temporal networks as well as providing a natural decomposition of the network.
	We illustrate the utility of the TEG by providing examples on both synthetic and real temporal networks.
	}
	{
	temporal networks; temporal motifs; connectivity; inter-event times.
	}
\end{abstract}

\section{Introduction}
	\label{sec:introduction}

	Temporal networks have seen increased use in the study of dynamics in complex systems.
	This is due partly to the increase in available time-stamped data from sources such as Twitter \cite{kwak2010twitter} and proximity sensors \cite{sociopatterns} (among others), but also due to the recognition that the temporal patterns of complex systems have a major influence on the proliferation of processes on them \cite{scholtes2014causality}.
	The role that temporal patterns of connectivity plays on dynamics is not fully understood.
	However, whether or not temporal temporal networks slow down or speed up spreading dynamics is in fact dependent on the details of the spreading mechanisms and the choice of parameters \cite{masuda2013temporal,iribarren2009impact,karsai2011small,onaga2017concurrency}.
	What is clear however is that the inclusion of temporal information uncovers patterns not observable from the study of a static network alone \cite{mastrandrea2015contact}. 
	It is therefore vital to characterise and understand the structure of temporal networks.
	
	There are many ways to represent a temporally evolving network \cite{holme2013temporal}, the most common being a series of static networks representing the connections of the network within particular time intervals.
	There have been many attempts to incorporate temporal information into static networks \cite{moody2002importance,kempe2000connectivity}.
	This usually involves labelling edges with the times at which they appear \cite{cheng2003time}.
	Conversely, other approaches create static aggregations of the temporal network.
	Examples include the time aggregated graph \cite{holme2003network}, the line graph \cite{liljeros2003sexual}, and the transmission graph \cite{riolo2001methods}.
	In each of these examples the representation is not lossless, that is, the full temporal network can not be recovered from the representation.
	This also means that many different temporal networks may be equivalent when expressed in these representations.

	In this article we introduce the \emph{temporal event graph} (TEG), a lossless, static representation of a temporal network.
	The TEG is most similar to the transmission graph and describes the network in terms of the distribution of inter-event times (IETs) and two-event temporal motifs \cite{kovanen2011temporal}.
	A derivative of the TEG also plays a crucial role in the calculation of higher order temporal motifs \cite{kovanen2011temporal}.
	The analysis of temporal motifs has previously uncovered various behaviours of individuals when applied to a number of different temporal networks \cite{kovanen2013temporal,jurgens2012temporal,schneider2013unravelling,xie2014triadic}.

	In Section~\ref{sec:temporaleventgraph} we describe the temporal event graph and show that it uniquely defines a temporal network.
	In Section~\ref{sec:theoretical} we outline some of the theoretical properties of the TEG and in Section~\ref{sec:statistical} we state the statistical properties which describe the TEG before applying them in Section~\ref{sec:data} to characterise an online social network.
	Finally, in Section~\ref{sec:conclusions} we discuss the possible applications of the TEG and further generalisations.

\section{The Temporal Event Graph}
	\label{sec:temporaleventgraph}

	We consider temporal networks as a sequence of temporal events $E$. 
	Let $V \subset \mathbb{N}$ be a set of interacting nodes, and $T \subset \mathbb{R}^+_0$ a non-emptpy ordered set of interaction times, then the temporal network is defined as the tuple $G=G(V,E,T)$ where $E \subset V^2 \times T$. 
	An individual event $e_i=(u_i,v_i,t_i)\in E$ corresponds to an interaction of node $u_i$ with node $v_i$ at time $t_i$ (here assuming interaction is instantaneous and that each $t_i$ is distinct).
	The systems most suited to this representation are communication networks (letter and email correspondence, phone calls, social media etc.) and proximity networks (human contact networks) \cite{barrat2014measuring}. 
	To define the TEG we first need to be able to relate two events in a meaningful way, capturing the relationship of the nodes and the temporal proximity of the events.
	One such relation is that of \Dt{}-\emph{adjacency} \cite{kovanen2011temporal}.
	\begin{definition}
	\label{def:dt_adj}
	Two time-ordered events $e_i, e_j$ are said to be \Dt{}-\emph{adjacent} if they share at least one node $\left(\{u_i, v_i\} \cap \{u_j, v_j\} \neq \emptyset \right)$ and the time between the two events (inter-event time) is no greater than \Dt{}, i.e., $0 < t_j-t_i < \mDt$.
	\end{definition}
	\noindent With this definition we can formally define the TEG.
	\begin{definition}
	\label{def:dt_teg}
	For a temporal network $G = G(V, E, T)$, the \Dt-\emph{Temporal Event Graph}, hereby known as the \Dt-TEG, is a directed graph $\mathcal{G} = \mathcal{G}(\mathcal{V}, \mathcal{E})$ with $\mathcal{V} = E$ and $\mathcal{E} \subset \mathcal{V} \times \mathcal{V}$.
	The graph is defined such that there is a vertex for each event in $E$ and each vertex is connected to the \emph{subsequent} \Dt-adjacent event of each node in that event. 
	More precisely, let
	\begin{align*}
	S(u_i) &= \{ k | \phantom{1}\underbrace{( \{u_i\} \cap \{u_k, v_k\} \neq \emptyset )}_{\textrm{Share a node}}\phantom{.} \text{ and } \underbrace{(0 < t_k - t_i < \mDt)}_{\textrm{Occur within }\mDt \textrm{ of each other}}\}, 
	\end{align*}
	be the set of subsequent \Dt{}-adjacent events for the node $u_i$ with the equivalent set defined for $v_i$.
	Then the set of edges in the TEG is then given by
	\begin{align*}
	\mathcal{E} &= \{ (e_i, e_j) | (j = \min\{S(u_i)\}) \text{ or } (j = \min\{S(v_i)\}) \}. 
	\end{align*}
	\noindent This construction means that each vertex has an out-degree and in-degree of at most two (see Lemma~\ref{lem:degree}). 
	\end{definition}

	The \Dt-TEG consists of one or more connected temporal components (or maximal temporal subgraphs \cite{kovanen2011temporal}), that is, for each pair of events in a component there exists a sequence of events between them such that all pairs of consecutive events are \Dt-adjacent, that is each pair of events are \Dt-\emph{connected}.
	Of particular interest is the \Dt-TEG in the limit $\mDt \to \infty$, hereby referred to as the TEG.
	The examples in Figure~\ref{fig:duality} show how the TEG is constructed from an event sequence.
	To avoid ambiguity we use the terms nodes and events for the temporal network, and vertices and edges for the TEG.
	\begin{figure}[p]
	\centering
	\includegraphics[width=\linewidth]{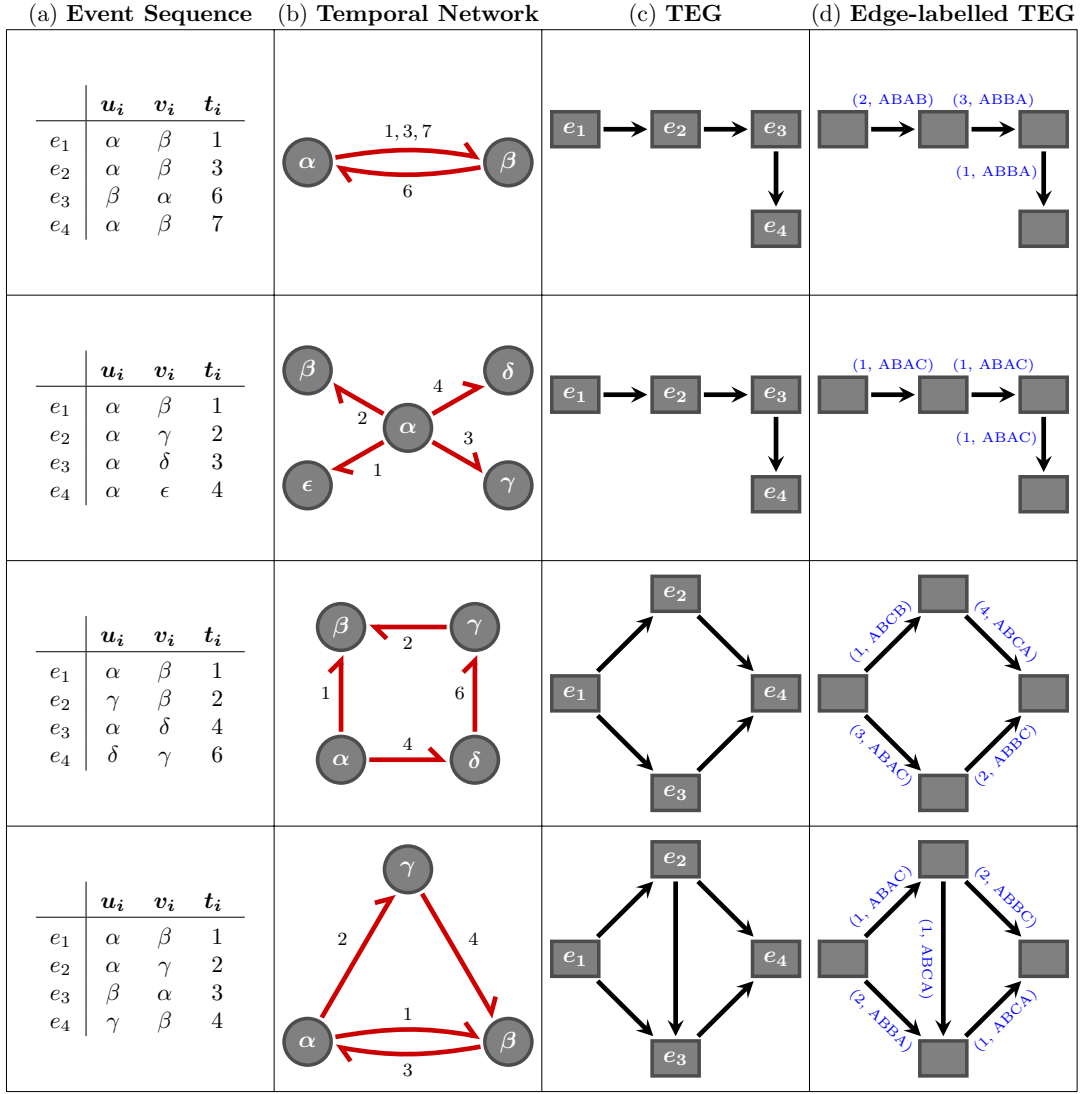}
	\caption{
	Illustration of the duality of temporal networks and the temporal event graph.
	a) Four simple temporal networks (event sequences) involving four events. 
	b) Pictorial representations of the temporal networks. Event labels represent the instantaneous time when that event occurred between two nodes. 
	c) The TEG for each temporal network.
	d) The corresponding \emph{edge-labelled} TEGs (Def.~\ref{def:edge-labelled_teg}).
	Edges are labelled with the tuple $(\tau, \mu)$, the inter-event time and motif respectively.
	Note in the bottom example the next two events for node A are connected to the first event. 
	This is consistent as the ABBA edge occurred \emph{after} that of the ABAC, i.e., node A's subsequent event was $\textrm{A} \to \textrm{C}$ and node B's subsequent event was $\textrm{B} \to \textrm{A}$ (coincidently A's next event). 
	}
	\label{fig:duality}
	\end{figure}
	
	There are two important functions of the edge set to consider.
	Firstly the function $\tau: \mathcal{E} \to \mathbb{R}_{0}^+$, given by $\tau\left((e_i, e_j)\right) = t_j - t_i$ describes the IET between the two events.
	The function $\mu: \mathcal{E} \to \mathbb{M}$, where $\mathbb{M} = \{\textrm{ABAB, ABBA, ABAC, ABCA, ABBC, ABCB}\}$ is the set of two-event motifs (Table \ref{table:twoeventmotifs}), describes the relative positions of the nodes between events.
	These motifs are given a descriptive name which is indicative of the behaviour associated with each pattern.
	For example, the ABAC motif is described as the broadcast motif as node A is in contact with multiple other nodes (B and C).
	The ABBA motif is the reciprocal motif, as an event from A to B is then followed by the reciprocal event B to A. 
	Let $f_{ij}$ be an enumeration of the ordered sequence of nodes $(u_i, v_i, u_j, v_j)$ (not necessarily distinct) mapped to the corresponding alphabetic character\footnote{
		e.g. $f_{ij}(u_i) = \text{A}, f_{ij}(v_i) = \text{B}, \dots$.
	}, then $\mu\left((e_i, e_j)\right) = f_{ij}(u_i)f_{ij}(v_i)f_{ij}(u_j)f_{ij}(v_j)$.
	For example, the edge $((5,10,t_0), (10,12,t_1))$ has nodes ($5,10,12$) which are mapped to (A,B,C) and so becomes ABBC under the action of $\mu$.
	It is also possible for the motif function $\mu$ to incorporate other event data such as event or node colourings.

	\begin{table}[h!]
		\centering
		\caption{
		The set of all possible two-event motifs $\mathbb{M}$, given by their contact sequence, description, label, and label properties $\xi_{\rm in}$, $\xi_{\rm out}$, and $\xi_{\rm switch}$. 
		}
		\label{table:twoeventmotifs}
		\begin{tabular}{@{}lllrrr@{}} \toprule[1pt]
		\textbf{Motif} & \textbf{Name}& \textbf{Shorthand} & $\xi_{\rm out}$ & $\xi_{\rm in}$ & $\xi_{\rm switch}$ \\ \midrule[0.5pt]
		$A\to B$, $B\to A$ \hspace{1ex} & Reciprocal & ABBA & AB & BA & $-1$ \\
		$A\to B$, $A\to B$ & Repeated & ABAB & AB & AB & $1$\\
		$A\to B$, $A\to C$ & Broadcasting & ABAC & A & A & $1$\\
		$A\to B$, $C\to A$ & Non-sequential & ABCA & A & B & $-1$\\
		$A\to B$, $B\to C$ & Message Passing \hspace{1ex} & ABBC & B & A & $-1$\\
		$A\to B$, $C\to B$ & Receiving & ABCB & B & B & $1$\\ \bottomrule[0.5pt]
		\end{tabular}
	\end{table}

	There are three properties of the motif set, $(\xi_{\rm out}, \xi_{\rm in}, \xi_{\rm switch})$, which are required in Section~\ref{subsec:duality} to describe consistency conditions for the TEG.
	For event pairs involving three distinct nodes we define $\xi_{\rm out}$ to be the label of the node which appears in both events, $\xi_{\rm in}$ to be the position of the shared node in the later event, and $\xi_{\rm switch} = 1$ if $\xi_{\rm out}=\xi_{\rm in}$ and $-1$ otherwise.
	For example in the motif ABBC the node labelled B occurs in both events so $\xi_{\rm out}(\textrm{ABBC})=\textrm{B}$ and takes the first position in the second event so $\xi_{\rm in}(\textrm{ABBC})= \textrm{A}$.
	Subsequently as $\xi_{\rm out}\neq\xi_{\rm in}$, then $\xi_{\rm switch}(\textrm{ABBC})=-1$, the node labelled B has switched between being a target to a source.
	For consistency we define $\xi_{\rm out}(\textrm{ABAB}) = \textrm{AB} = \xi_{\rm out}(\textrm{ABBA})$ and $\xi_{\rm in}(\textrm{ABAB})=\textrm{AB}$ and $\xi_{\rm in}(\textrm{ABBA}) = \textrm{BA}$.
	
	\subsection{Duality}
	\label{subsec:duality}

	When constructing the TEG from a temporal network we have information on the events and their connectivity.
	We can also consider a TEG without the event information, defined purely by the connectivity information and edge functions (IETs and motifs).
	\begin{definition}
	\label{def:edge-labelled_teg}
	The \emph{edge-labelled TEG} is the static graph defined by the upper-triangular adjacency pair $(A^{\tau}, A^{\mu})$
	where
	\begin{align*}
	A^{\tau}_{ij} = \begin{cases} 
			  \tau(e_i, e_j) &\mbox{if } (e_i, e_j) \in \mathcal{E} \\ 
			  0 & \mbox{otherwise } 
			  \end{cases},
	\end{align*}
	is the weighted adjacency matrix consisting of IETs and
	\begin{align*}
	A^{\mu}_{ij} = \begin{cases} 
			  \mu(e_i, e_j) &\mbox{if } (e_i, e_j) \in \mathcal{E} \\ 
			  0 & \mbox{otherwise } 
			  \end{cases}.
	\end{align*}
	is the matrix containing edge motif labels.
	\end{definition}

	Not all permutations of the vertices and edges of an edge-labelled TEG describe a temporal network.
	There are four conditions required for an edge-labelled TEG to represent a temporal network. 
	
	\begin{enumerate}
	\item[{[C1]}] \textbf{Event times must be consistent across all paths:} Let $P_{ij}$ be the set of all directed paths between vertices $i$ and $j$. 
	We describe a path $p_\alpha \in P_{ij}$ as the sequence of edges in the path. 
	The sum of inter-event times along all paths must be equal, that is 
	\begin{align*}
	\sum_{(k,l) \in p_\alpha}{A^\tau_{kl}} = \sum_{(k,l) \in p_\beta}{A^\tau_{kl}} \textrm{ for all } p_\alpha, p_\beta \in P_{ij}.
	\end{align*}
	\item[{[C2]}] \textbf{Nodes in each event have only one subsequent event:}
	For each pair of out-edges $(i, k), (i, l)$ of a vertex we require $\xi_{\rm out}(A^\mu_{ik}) \neq \xi_{\rm out}(A^\mu_{il})$.

	\item[{[C3]}] \textbf{Nodes in each event cannot be overprescribed:}
	For each pair of in-edges $(k, i), (l, i)$ of a vertex we require $\xi_{\rm in}(A^\mu_{ki}) \neq \xi_{\rm in}(A^\mu_{li})$. 

	\item[{[C4]}] \textbf{Edge types and nodes must be consistent across multiple paths:} If there exists an edge $(i,j)$ such that there exists a secondary path $p \in P_{ij}$ via at least one other vertex then
	\begin{align*}
	A^\mu_{ij} = \begin{cases} 
			  \textrm{ABAB} &\mbox{if } \displaystyle\prod_{(k,l) \in p}{\xi_{\rm switch}(A^\mu_{kl})} = 1 \\ 
			  \textrm{ABBA} & \mbox{if } \displaystyle\prod_{(k,l) \in p}{\xi_{\rm switch}(A^\mu_{kl})} = -1
			  \end{cases} .
	\end{align*}
	Conversely if there is a vertex with two in edges, $(i,j), (k,j)$, with $A^\mu_{ij} \in \{\textrm{ABAB, ABBA}\}$ then there exists a path $p \in P_{ij}$ with $(k,j) \in p$ and $\prod_{(m,n) \in p}{\xi_{\textrm{switch}}(A^\mu_{mn})} = \xi_{\textrm{switch}}(A^\mu_{ij})$.
	Similarly for a vertex with two out edges $(i,j), (i,k)$ with $A^\mu_{ij} \in \{\textrm{ABAB, ABBA}\}$ then there exists a path $p \in P_{ij}$ with $(i,k) \in p$ and $\prod_{(m,n) \in p}{\xi_{\textrm{switch}}(A^\mu_{mn})} = \xi_{\textrm{switch}}(A^\mu_{ij})$.
	\end{enumerate}
	We call graphs which satisfy these conditions \emph{consistent} graphs.
	Those graphs which do not satisfy these conditions are inconsistent in that they do not uniquely describe a temporal network, and attempting to recover the temporal network will lead to contradiction.
	Examples of inconsistent TEGs are given in Figure~\ref{fig:inconsistent_graphs}.
	Generally it is difficult to generate graphs which satisfy these conditions however any TEG generated from a temporal network will be consistent.

	\begin{figure}[h!]
	\centering
	\includegraphics[width=0.7\linewidth]{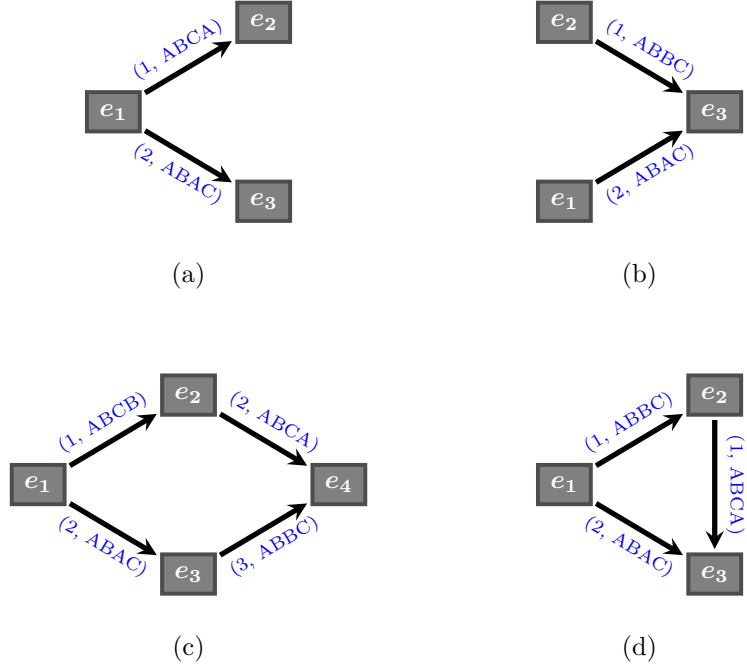}
	\caption{
	Inconsistent edge-labelled temporal event graphs. Edges are labelled with the tuple $(\tau, \mu)$. 
	a) The subsequent two events for node A are included as edges, breaking condition [C2]. 
	b) Both incoming edge types dictate the first node of the event which is contradictory (condition [C3]). 
	c) The inter-event times across multiple paths are not equal (condition [C1]). 
	d) The edge between events $e_1$ and $e_3$ is incorrectly labelled. 
	By reconstructing the temporal network or using condition [C4] we see that $A^\mu_{13}= \textrm{ABAB}$.
	}
	\label{fig:inconsistent_graphs}
	\end{figure}

	For each connected component of an edge-labelled TEG we are able to reconstruct the temporal network with the following algorithm:
	\begin{enumerate}
	\item[(a)] Find the maximal path from a root vertex (no incoming edges) to a leaf vertex (no outgoing edges) in the edge-labelled TEG using the network of IETs, $A^\tau$, allowing for backwards traversal along edges with opposite weight. (Fig.~\ref{fig:algorithm}(a)). This can be achieved by finding the shortest path in the network $\left(A^\tau\right)^T - A^\tau$. 
	\item[(b)] Label the first vertex in the path with $t=0$ and subsequently propagate the event times through the edge-labelled TEG along the edges.
	For a vertex $i$, the time at which that event occurs is given by
	\begin{align*}
	t_i = \sum_{(m,n) \in P_{0i}} \left(\left(A^\tau\right)^T - A^\tau\right)_{mn}
	\end{align*}
	To be able to do this we require the condition [C1] otherwise the existence of multiple paths between vertices leads to a contradiction in event times. 
	\item[(c)] For events in time order, resolve the nodes in each event from the incoming edges (Fig.~\ref{fig:algorithm}(b,c)).
	We require condition [C3] here otherwise there can be a conflict on resolving a node position.
	If a node in an event is unprescribed (the event has zero or one incoming edge) then the unprescribed nodes are given a new label.
	\end{enumerate}
	Condition [C2] is required by definition of the edge-labelled TEG to enforce that the subsequent event of each node is connected by an edge. 
	Without it, the subsequent two edges for one node could be given.
	Finally condition [C4] ensures that the edge-labelled TEG is uniquely labelled (Fig.~\ref{fig:inconsistent_graphs}(d)).

	\begin{figure}[t!]
	\centering
	\includegraphics[width=\linewidth]{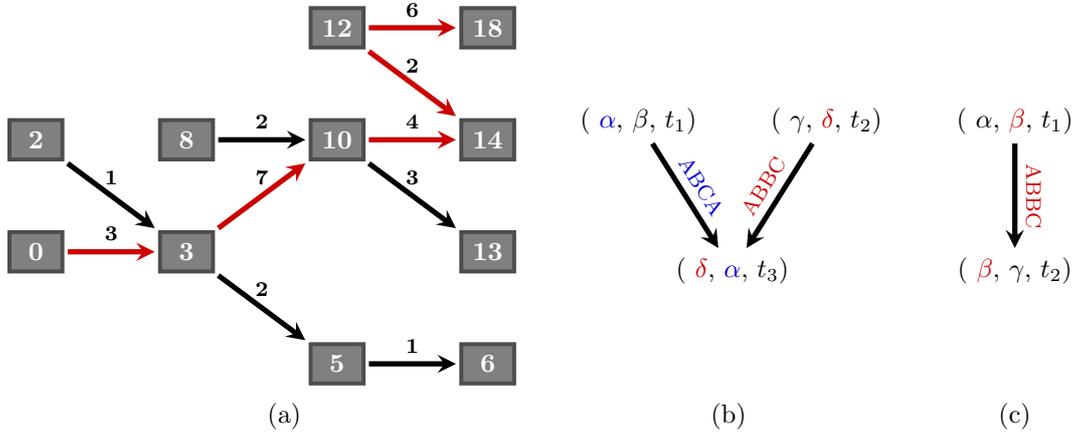}
	\caption{
	The inverse algorithm for the TEG. 
	a) The maximal path between root and leaf vertices (blue) through the TEG with edges labelled with IETs.
	Once the maximal path has been found, the root vertex is assigned time $t=0$ and the remainder of times are found by propagation along edges (red).
	b) The resolution of an event from two incoming edges. Each incoming edge determines one of the nodes in the later event.
	c) The resolution of an event with one incoming edge. In this case only one node is prescribed and so the other is given a new label.
	}
	\label{fig:algorithm}
	\end{figure}

	The existence of an inverse algorithm confirms a duality between the edge-labelled TEG and the temporal network.
	Step (a) of the inverse algorithm assumes that the maximal path through the TEG contains the earliest and latest event in the temporal network, which we now prove.
 	\begin{lemma}
	\label{lemma:maximal_path}
	The maximal path (allowing for backwards traversal along edges with negative weight) through the edge-labelled TEG includes the earliest and latest event in the temporal network.
	\end{lemma}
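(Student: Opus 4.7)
The plan is to establish three facts in succession: (i) that the earliest event $e_{\min}$ in a connected component of the TEG is necessarily a root and the latest event $e_{\max}$ is necessarily a leaf; (ii) that by condition [C1] the signed weighted length along a path in $\left(A^\tau\right)^T - A^\tau$ defines a potential on the vertex set equal to the difference of event times at the endpoints; and (iii) that maximising this length over root-to-leaf paths forces the path to start at $e_{\min}$ and end at $e_{\max}$.

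For step (i), I would note that an incoming TEG edge $(e_k, e_{\min}) \in \mathcal{E}$ would require $t_k < t_{\min}$ with $e_k$ in the same connected component, contradicting the minimality of $t_{\min}$; hence $e_{\min}$ has in-degree zero and is a root. Exchanging in- and out-edges shows $e_{\max}$ is a leaf. For step (ii), since $A^\tau_{ij} = t_j - t_i$ on each forward TEG edge, the signed edge weight in $\left(A^\tau\right)^T - A^\tau$ contributes $t_j - t_i$ when traversed forward and its negation when traversed backward, so summing along any path from $e_a$ to $e_b$ telescopes to $t_b - t_a$. Condition [C1] is precisely the assertion that this value is path-independent (equivalently, that the signed sum around any closed walk in the symmetrised graph vanishes), so the length of any root-to-leaf path equals $t_\ell - t_r$.

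For step (iii), among the roots $r$ of the component we have $t_r \geq t_{\min}$ with equality only for $r = e_{\min}$ (event times being distinct), and symmetrically $t_\ell \leq t_{\max}$ with equality only for $\ell = e_{\max}$. The length $t_\ell - t_r$ is therefore bounded above by $t_{\max} - t_{\min}$, and the bound is attained uniquely when $r = e_{\min}$ and $\ell = e_{\max}$. Existence of a realising path follows from the component being connected as an undirected graph, so that some sequence of forward and backward edge traversals links $e_{\min}$ to $e_{\max}$.

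The main obstacle I anticipate is careful sign bookkeeping in $\left(A^\tau\right)^T - A^\tau$: one must verify that what step (a) of the inverse algorithm calls a \emph{shortest} path is, after the antisymmetry of this matrix is taken into account, exactly the root-to-leaf quantity being \emph{maximised} here. Once that identification is pinned down, the remainder reduces to an elementary extremum argument over the finite sets of roots and leaves, with [C1] doing all of the nontrivial work by guaranteeing that the signed length behaves as a well-defined potential.
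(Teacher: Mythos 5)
Your proof is correct, and it reaches the conclusion by a direct optimisation argument where the paper argues by contradiction. The paper's proof supposes an event $e_*$ off the maximal path $p_{max}$ with $t_* < t_i$ for every $t_i$ on the path, uses connectivity to join $e_*$ to the path, and observes that the concatenation $e_* \to e_i \to e_k$ is strictly longer than $p_{max}$ --- a contradiction (with a mirrored argument for the latest event). Both arguments ultimately rest on the telescoping identity you isolate in step (ii): the signed length of any undirected path from $e_a$ to $e_b$ in $\left(A^\tau\right)^T - A^\tau$ equals $t_b - t_a$, so path length is a potential difference. What your version buys is some care the paper elides: you verify that $e_{\min}$ is actually a root and $e_{\max}$ a leaf, so they are admissible endpoints for the root-to-leaf search in step (a) of the inverse algorithm; you make path-independence explicit via [C1]; and you establish existence of a realising path from undirected connectivity, whereas the paper's concatenated path quietly assumes $e_*$ can serve as a starting point and that the concatenation is a valid path. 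Conversely, the paper's exchange argument needs no discussion of roots and leaves at all, since it only compares a hypothetical longer path against $p_{max}$. Two minor points on your side: the walk from $e_{\min}$ to $e_{\max}$ obtained from connectivity should be reduced to a simple path before being offered as the realising root-to-leaf path (its telescoped length is unaffected), and your conclusion that $e_{\min}$ and $e_{\max}$ are the \emph{endpoints} of the maximal path is slightly stronger than the stated claim that the path merely \emph{includes} them, which is harmless.
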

	\begin{proof}
	Let $p_{max} = (e_0, \dots, e_k)$ be the sequence of vertices in the maximal path.
	Suppose there exists an event $e_{*} \notin p_{max}$ such that $t_* < t_i$ for $i=0,\dots,k$.
	Then, as the TEG is connected, there exists a path $p_{*i}$ (ignoring edge directions) from $e_* \to e_i$, $\forall e_i \in p_{max}$.
	Then $l(p_{*i}) > l(p_{0i})$ where $l(\cdot)$ is the weighted length of the path, and hence the path $e_* \to e_i \to e_k$ is longer than $p_{max}$.
	This is a contradiction and hence the maximal path through the TEG must contain the earliest event in the temporal network.
	A similar but opposite argument shows that the latest event is also contained in the maximal path. 
	\end{proof}	

	\begin{theorem}
	\label{thm:uniqueness}
	Let $X$ be the set of all temporal networks translated in time such that the first event occurs at $t=0$, nodes are labelled in order of appearance, and such that the time-aggregated graph of connections is connected.
	Let $Y$ be the set of all consistent weakly connected temporal event graphs.
	Then there exists a bijection $f: X \to Y$, that is, an edge-labelled TEG uniquely describes a temporal network in $X$.
	\end{theorem}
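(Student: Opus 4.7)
The plan is to exhibit the bijection explicitly by constructing forward and backward maps and showing they are mutual inverses.

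For the forward map $f: X \to Y$, take $f(G)$ to be the edge-labelled TEG built from $G$ via Definition~\ref{def:dt_teg} in the limit $\mDt \to \infty$, together with the labelling functions $\tau$ and $\mu$. I need to check that $f(G) \in Y$. Weak connectedness of $f(G)$ follows from connectedness of the time-aggregated graph: for any fixed node $v$, the events containing $v$ form a chain in the TEG (each is joined to the next by an edge through $v$), and whenever the time-aggregated graph has an edge $(u,v)$ there is at least one event containing both $u$ and $v$, merging $u$'s chain with $v$'s chain. Consistency is then verified condition by condition: [C1] holds because $A^{\tau}_{ij} = t_j - t_i$, so sums telescope along any directed path; [C2] and [C3] hold because, by construction, each node in each event has at most one subsequent (respectively preceding) $\mDt$-adjacent event, and $\xi_{\rm out}$, $\xi_{\rm in}$ encode exactly which of the two node-slots an edge tracks; [C4] is checked by showing that $\xi_{\rm switch}$ records whether the shared node along an edge remains in the same slot or swaps between source and target, so its product along any path between $e_i$ and $e_j$ with $\{u_i,v_i\}=\{u_j,v_j\}$ reads off whether the motif $\mu(e_i,e_j)$ is ABAB or ABBA.

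For the backward map $g: Y \to X$, take $g(\mathcal{G})$ to be the output of the reconstruction algorithm~(a)--(c), followed by the normalisation of shifting times so the earliest event is at $t=0$ and relabelling nodes in order of first appearance. The well-definedness uses all four consistency conditions plus Lemma~\ref{lemma:maximal_path}: the lemma guarantees step~(a) finds a path covering the temporal extremes; [C1] makes the propagation in step~(b) path-independent; [C3] makes the resolution of node identities from incoming edges in step~(c) unambiguous; [C2] ensures no two out-edges from a vertex claim the same node-slot; and [C4] ensures that when a vertex is reached by multiple paths, the ABAB/ABBA motif on any closing edge is consistent with the identities already assigned. The result is a valid temporal network with distinct event times, first event at $t=0$, and nodes labelled in appearance order, hence an element of $X$.

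Finally I would verify $g \circ f = \mathrm{id}_X$ and $f \circ g = \mathrm{id}_Y$. The composition $g\circ f$ is immediate: the reconstruction applied to the TEG of $G \in X$ recovers exactly the event times and node identities of $G$ (both are already normalised). For $f \circ g$, I would argue that the TEG built from $g(\mathcal{G})$ has the same vertex set (the events just constructed), the same edges (since the subsequent-adjacency relation in the reconstructed network is dictated by the out/in-edge structure of $\mathcal{G}$ via [C2],[C3]), the same IET weights (from step~(b)), and the same motif labels (from the node identifications in step~(c), with ABAB/ABBA pinned down by [C4]).

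The main obstacle will be the motif-label half of $f \circ g = \mathrm{id}_Y$: for edges whose endpoints involve two or three distinct nodes, the motif label is forced by $\xi_{\rm out}$ and $\xi_{\rm in}$ applied to the reconstructed events, which is routine; but for edges labelled ABAB or ABBA the motif is not determined locally by the shared-node data, and one must invoke [C4] together with an induction in time order to show that the reconstructed nodes always produce the same label that $A^{\mu}$ carried. Handling this inductive step carefully, including the converse clause of [C4] that guarantees the requisite alternative path exists whenever needed, is the delicate part of the argument.
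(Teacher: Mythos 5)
Your proposal is correct and follows essentially the same route as the paper: the forward map is the TEG construction of Definition~\ref{def:dt_teg} in the limit $\mDt \to \infty$, and the inverse is the reconstruction algorithm, anchored by Lemma~\ref{lemma:maximal_path} to fix the time origin and by an induction in time order to recover node identities (your three cases for the number of prescribed node-slots are exactly the paper's three cases). Where you go beyond the paper is in completeness rather than in method: the paper's written proof only establishes the $g \circ f = \mathrm{id}_X$ direction --- that the inverse algorithm applied to the TEG of a network recovers that network --- and then asserts bijectivity, leaving implicit both the verification that $f(G)$ actually lands in $Y$ (consistency and weak connectedness) and the surjectivity half $f \circ g = \mathrm{id}_Y$. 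Your observation that the ABAB/ABBA motif labels are the delicate point in that half --- since they are not determined locally by the shared-node data and one must invoke the converse clause of [C4] to produce the alternative path that pins them down --- identifies precisely the step the paper does not write out, so if anything your plan is the more rigorous of the two.
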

	\begin{proof}
	Trivially for each temporal network there exists only one edge-labelled TEG as the nodes in each event have at most one subsequent event\footnote{
	Here we assume that no two events occur at the time same.
	} and the functions $\tau$ and $\mu$ are deterministic.
	The proof rests on the existence of the inverse algorithm $f^{-1}$, outlined above. 
	We consider a general event $e_i = (u_i, v_i, t_i)$ in the temporal network, and its representative vertex $x$ in the edge-labelled TEG.
	By the translation of the temporal network, this event occurs $t_i$ time units after the first event.
	By finding the maximal path through the edge-labelled TEG we find the first event in the temporal network (Lemma \ref{lemma:maximal_path}), and can hence find the time which $x$ occurs relative to this first event, that is, $t_i$.
	The event is now is correctly placed in time.
	To recover the nodes of the event $u_i$ and $v_i$, assume the nodes in all previous events have been correctly determined in order of appearance. 
	There are three possible cases:
	\begin{enumerate}
	\item Event $e_i$ has no incoming edges. In this case neither of these nodes have previously interacted and can be enumerated.
	\item Event $e_i$ has one incoming edge prescribing one node. In this case a new node is involved and is enumerated accordingly (Fig.~\ref{fig:algorithm}(c)). 
	\item Event $e_i$ has one or two incoming edges prescribing both nodes. In this case the nodes are completely determined by previous events (Fig.~\ref{fig:algorithm}(b)).
	\end{enumerate}
	For the base case, the earliest event vertices have no incoming edges and are labelled freely.
	Subsequent event vertices must then have all incoming edges prescribed as they occur strictly earlier in time.
	Hence the nodes in $e_i$ are correctly labelled, relative to the labelling of the previous events.
	As both nodes are labelled relative to previous events, and the time of the event is positioned relative to the first event, the event is recovered from the TEG.
	Since this argument holds for an arbitrary event in the temporal network, it holds for all.
	Therefore $f^{-1}(f(X)) = X$, and $f$ is a bijection. 
	\end{proof}
	\begin{corollary}
	A temporal event graph $\mathcal{G}$, consisting of multiple connected components defines a temporal network up to a translation of time between components. 
	If the events of $\mathcal{G}$ are time stamped then $\mathcal{G}$ uniquely defines a temporal network.
	\end{corollary}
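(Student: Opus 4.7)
The plan is to bootstrap Theorem~\ref{thm:uniqueness} component by component, and then argue that the only residual freedom between components is a time shift (and, without timestamps, a relabelling of nodes that do not appear in more than one component). Since each connected component of $\mathcal{G}$ is itself a weakly connected consistent edge-labelled TEG, Theorem~\ref{thm:uniqueness} applies directly and produces, for each component $\mathcal{G}_k$, a unique temporal network $G_k$ whose first event is translated to $t=0$ and whose nodes are enumerated in order of appearance within that component.

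Next I would show that the ``glueing'' of the $G_k$ into a single temporal network $G$ is determined only up to a choice of time offset $s_k \in \mathbb{R}$ for each component. First, observe that any translation $G_k \mapsto G_k + s_k$ leaves the within-component IETs and motif labels unchanged, so the TEG of the glued network still has $\mathcal{G}_k$ as a component. Second, because no edge of $\mathcal{G}$ crosses between components, there is by construction no $\Delta t$-adjacent pair of events straddling two different $\mathcal{G}_k$'s, so the glued network's $\Delta t$-TEG splits as $\bigsqcup_k \mathcal{G}_k = \mathcal{G}$, independent of the offsets $s_k$. Thus any assignment of offsets yields a valid preimage under $f$, and conversely every temporal network with TEG $\mathcal{G}$ must restrict to a translate of $G_k$ on each component; this gives the ``up to a translation of time between components'' clause.

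For the second part, suppose instead that every vertex of $\mathcal{G}$ carries an absolute timestamp. Then in each component the first event already has a prescribed time $t_k^{(0)}$, so the offset $s_k$ from the normalised reconstruction is forced to be $s_k = t_k^{(0)}$, eliminating the freedom above. Since we can now globally order all events and label any node that first appears in component $\mathcal{G}_k$ accordingly, the full temporal network is recovered uniquely, and $\mathcal{G}$ determines a single element of $X$.

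The one step that needs a little care, and which I expect to be the main subtlety rather than the main obstacle, is the case where two different components share a node identity. In the $\Delta t \to \infty$ limit this cannot happen (any two events sharing a node are $\Delta t$-adjacent for some finite separation and hence in the same component), so for the TEG proper the argument above is clean. For finite $\Delta t$ one must acknowledge that without timestamps node identities cannot be matched across components and so the reconstruction is only unique up to a permutation of node labels between components; with timestamps the global ``order of appearance'' rule fixes this as well, so the uniqueness statement stands.
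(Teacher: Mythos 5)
Your proposal is correct and follows essentially the same route as the paper: apply Theorem~\ref{thm:uniqueness} to each weakly connected component and observe that the only residual freedom is the choice of earliest-event time for each component, which the timestamps eliminate. The one caveat is your closing remark about finite $\Delta t$: if a node's appearances in two components are separated by more than $\Delta t$, the $\Delta t$-TEG simply contains no record that they are the same node, so timestamps alone cannot restore that identification --- but this lies outside the corollary as stated, which concerns the TEG proper ($\Delta t \to \infty$), where Lemma~\ref{lem:distinct} guarantees the components have disjoint node sets.
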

	\begin{proof}
	By Theorem~\ref{thm:uniqueness} for each connected component there exists a unique temporal network such that the earliest event occurs at $t=0$. 
	Trivially there exists an ensemble of temporal networks with the same TEG, dependent on the choice of earliest event time for each component.
	If the time of this event is given then the choice is removed and hence the TEG uniquely defines the temporal network.
	\end{proof}

	Time translation between components may seem disconcerting, however these components are truly disconnected and do not share \emph{any} nodes.
	This means that, assuming the network is not visible to those within it, any dynamics on the network are completely independent across components\footnote{
	In the case where the network is visible, observing the network usually prompts a response that is directed towards the observed agents, subsequently connecting the two components.
	There may be cases where nodes in one component observe nodes in another and act upon that information without any interaction with the component.
	In these cases it is important to include the time stamp of each event in the TEG. 
	}.
	Most digital communication channels that we will consider are hidden from an observer, e.g. email, SMS, telephone calls. 
	Other sources of communication such as Twitter are in the public domain and so all messages are observable (although require active searching).
	Furthermore, with real examples we keep the event timestamps which fixes the temporal components in time, and so the TEG uniquely defines a temporal network.
	This means that the temporal network can be uniquely defined within the time translation of components by the network of subsequent adjacent events, their IETs, and the motifs formed between them.
	As a result, considering the network in this formalism is equivalent to studying the temporal network as the same information is contained in both.

\section{Theoretical Properties of the TEG}
	\label{sec:theoretical}

	In this section we state and prove a number of properties of the TEG.
	\begin{lemma}
	Each vertex in the TEG has at most in-degree two and out-degree two.
	\label{lem:degree}
	\end{lemma}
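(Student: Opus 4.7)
The plan is to bound the out-degree and in-degree of an arbitrary vertex $e_i = (u_i, v_i, t_i)$ separately, exploiting the fact that every event involves exactly two nodes and that $\min\{S(\cdot)\}$ picks out a single index whenever the set is non-empty.

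For the out-degree the bound is essentially immediate from Definition~\ref{def:dt_teg}: the possible targets of an outgoing edge from $e_i$ are precisely $\min\{S(u_i)\}$ and $\min\{S(v_i)\}$, each of which is a single vertex when the corresponding set is non-empty. Hence $e_i$ has at most two out-neighbours, one attributable to each of the two nodes of the event (with fewer if one of the sets is empty, or if the two minima coincide).

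For the in-degree I would argue that for each $w \in \{u_i, v_i\}$, at most one prior event can contribute an in-edge through $w$. Suppose for contradiction that two distinct events $e_{k_1}$ and $e_{k_2}$, both involving $w$, satisfy $i = \min\{S(w)\}$ relative to themselves, and order them so that $t_{k_1} < t_{k_2} < t_i$. Since $e_{k_2}$ shares $w$ with $e_{k_1}$ and $0 < t_{k_2} - t_{k_1} < t_i - t_{k_1} < \mDt$, the index $k_2$ lies in $S(w)$ relative to $e_{k_1}$. But $k_2 < i$ contradicts $i = \min\{S(w)\}$ at $e_{k_1}$. Aggregating over the (at most two) nodes of $e_i$ yields in-degree at most two.

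The only non-trivial ingredient is the chained inequality $t_{k_2} - t_{k_1} < \mDt$ in the in-degree step, but this follows because $i = \min\{S(w)\}$ at $e_{k_1}$ already forces $t_i - t_{k_1} < \mDt$, so any shared-node event that lies strictly between $e_{k_1}$ and $e_i$ automatically sits inside the $\mDt$-window of $e_{k_1}$. Beyond that, the argument is bookkeeping: out-edges and in-edges at a given vertex are each naturally indexed by which of the two nodes of the event they travel through, and summing gives the claimed bound.
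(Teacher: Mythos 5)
Your proof is correct, and the out-degree half is identical to the paper's (read the two candidate targets $\min\{S(u_i)\}$ and $\min\{S(v_i)\}$ straight off Definition~\ref{def:dt_teg}). The in-degree half takes a genuinely different, and in fact tighter, route. The paper handles in-degree by asserting that the edge set can be ``analogously'' re-characterised backwards, as $\mathcal{E} = \{(e_j,e_i) \mid j = \max(A^-_u(i)) \text{ or } j = \max(A^-_v(i))\}$ where $A^-_u(i)$, $A^-_v(i)$ are the sets of \emph{previous} $\mDt$-adjacent events of $u_i$ and $v_i$; the bound of two then follows because each set has a single maximum. What the paper does not prove is that this backward max-characterisation actually coincides with the forward min-characterisation used to define $\mathcal{E}$. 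Your contradiction argument --- if two earlier events $e_{k_1}, e_{k_2}$ sharing $w$ both had $e_i$ as the first subsequent $w$-adjacent event, then $e_{k_2}$ would lie in $S(w)$ relative to $e_{k_1}$ and precede $e_i$, contradicting minimality --- is precisely the missing justification for (the half of) that equivalence needed for the upper bound, so your version is the more self-contained of the two. One small point worth making explicit: an in-edge from an event sharing \emph{both} nodes with $e_i$ (an ABAB or ABBA pair) is witnessed by both $w = u_i$ and $w = v_i$, but since you only need that every in-edge is witnessed by at least one $w$ and each $w$ witnesses at most one in-edge, the covering count still gives in-degree at most two.
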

	\begin{proof}
	Consider an event vertex representing the event $e_i = (u_i, v_i, t_i)$.
	From our definition we let
	\begin{align*}
	A^+_{u}(i) &= \{ k | ( \{u_i\} \cap \{u_k, v_k\} \neq \emptyset ) \text{ and } (0 < t_k - t_i < \mDt) \}, \\
	A^+_{v}(i) &= \{ k | ( \{v_i\} \cap \{u_k, v_k\} \neq \emptyset ) \text{ and } (0 < t_k - t_i < \mDt) \} 
	\end{align*}
	be the subsequent \Dt{}-adjacent events for the nodes $u_i$ and $v_i$ respectively.
	The set of edges in the TEG is given by
	\begin{align*}
	\mathcal{E} &= \{ (e_i, e_j) | j = \min(A^+_{u}(i)) \text{ or } j = \min(A^+_{v}(i)) \}. 
	\end{align*}
	Therefore, for each $e_i$ there exists then the two edges to events whose indices are the minima of each set.
	These two minima do not need be unique, nor exist, and so there are at most two out edges.
	For the edge in-degree, the previous \Dt{}-adjacent events for the nodes $u_i$ and $v_i$
	are
	\begin{align*}
	A^-_{u}(i) &= \{ k | ( \{u_i\} \cap \{u_k, v_k\} \neq \emptyset ) \text{ and } (0 < t_i - t_k < \mDt) \}, \\
	A^-_{v}(i) &= \{ k | ( \{v_i\} \cap \{u_k, v_k\} \neq \emptyset ) \text{ and } (0 < t_i - t_k < \mDt) \} 
	\end{align*}
	We can analogously define the edge set as
	\begin{align*}
	\mathcal{E} &= \{ (e_j, e_i) | j = \max(A^-_{u}(i)) \text{ or } j = \max(A^-_{v}(i)) \}. 
	\end{align*}
	By the same reasoning as with forward definition, vertices can have a maximum in-degree
	of at most two.
	\end{proof}

	\begin{lemma}
	The TEG is a directed acyclic graph (DAG).
	\label{lem:dag}
	\end{lemma}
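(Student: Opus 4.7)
The plan is to exploit the fact that every directed edge in the TEG is, by construction, oriented from an earlier event to a strictly later one, so that following edges strictly increases the timestamp and precludes a return to the starting vertex.

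First I would unpack the definition of the edge set. For any edge $(e_i, e_j) \in \mathcal{E}$, we have $j \in S(u_i) \cup S(v_i)$, and membership in either set requires $0 < t_j - t_i < \mDt$. In particular $t_j > t_i$. Define the strict temporal order $e_i \prec e_j$ whenever $t_i < t_j$; the observation above shows that every directed edge of $\mathcal{G}$ respects $\prec$.

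Next I would argue by contradiction. Suppose a directed cycle $e_{i_1} \to e_{i_2} \to \cdots \to e_{i_k} \to e_{i_1}$ exists in $\mathcal{G}$ with $k \ge 1$. Applying the previous observation to each consecutive edge yields the chain $t_{i_1} < t_{i_2} < \cdots < t_{i_k} < t_{i_1}$, i.e. $t_{i_1} < t_{i_1}$, which is impossible. Hence no directed cycle can exist and $\mathcal{G}$ is a DAG.

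There is no real obstacle here; the content is essentially that a relation contained in a strict total order on the vertex set is acyclic. The only minor subtlety worth flagging in the write-up is the standing assumption (made explicit earlier in the paper) that all event timestamps are distinct, so that $\prec$ is genuinely a strict order on $\mathcal{V} = E$; without that, the argument would need to be phrased in terms of the time component only, but the conclusion is unchanged because the adjacency relation still strictly increases $t$.
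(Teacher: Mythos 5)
Your proof is correct and follows essentially the same route as the paper's: observe that every edge of $\mathcal{G}$ points from an earlier event to a strictly later one (since membership in $S(u_i)$ or $S(v_i)$ requires $t_j - t_i > 0$), so a directed cycle would force $t_{i_1} < t_{i_1}$, a contradiction. Your remark about the standing assumption of distinct timestamps is a sensible clarification but does not change the argument.
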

	\begin{proof}
	For a graph $G$ to be a DAG, each node in $G$ must not have a directed path from that node back to itself.
	The edge set is given by
	\begin{align*}
	\mathcal{E} &= \{ (e_i, e_j) | j = \min(A^+_{u}(i)) \textrm{ or } j = \min(A^+_{v}(i)) \}
	\end{align*}
	where the set $A^+_{u}(i)$ contains only events $e_k$ such that $t_k > t_i$ by definition.
	Suppose there exists a direct path from event $i$ back to itself via a sequence of ordered events $e_{k_1}, e_{k_2}, \dots e_{k_n}$. 
	Then by transitivity this implies $t_i < t_{k_1} < t_{k_2} < \dots < t_{k_n} < t_i$, which is a contradiction.
	Hence no such path exists and the TEG is a DAG.
	Simply put, as edges travel strictly forward in time there can be no cycles in the graph.
	\end{proof}

	\begin{lemma}
	The set of nodes in each component of the TEG are distinct, i.e. if there exists two components of the TEG, $C_1$ and $C_2$, with node sets $V_1, V_2, \subset V$ then $V_1 \cap V_2 = \emptyset$.
	\label{lem:distinct}
	\end{lemma}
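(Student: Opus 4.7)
The plan is to argue by contradiction. Suppose, for a contradiction, that there are two distinct connected components $C_1, C_2$ of the TEG whose vertex sets contain events sharing some node $w \in V_1 \cap V_2$. Let $E_w \subset E$ be the set of all events in which $w$ participates, ordered chronologically as $e_{k_1}, e_{k_2}, \dots, e_{k_m}$ (recall $T$ is totally ordered and times are distinct). By assumption $E_w$ contains at least one event in $C_1$ and at least one in $C_2$, so there must be some index $r$ such that $e_{k_r}$ and $e_{k_{r+1}}$ lie in different components.

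The key step is to show that any two consecutive events in $E_w$ are always joined by an edge of the TEG. Fix $e_{k_r}, e_{k_{r+1}}$ and note that $w$ is one of the two endpoints of $e_{k_r}$, say $w = u_{k_r}$ (the other case is identical). In the $\mDt \to \infty$ limit the set $S(u_{k_r})$ from Definition~\ref{def:dt_teg} reduces to all indices of later events containing $w$, so $k_{r+1} = \min S(u_{k_r})$ by construction of the chronological ordering. Hence $(e_{k_r}, e_{k_{r+1}}) \in \mathcal{E}$, placing the two events in the same weakly connected component.

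This contradicts the assumption that $e_{k_r} \in C_1$ and $e_{k_{r+1}} \in C_2$ with $C_1 \neq C_2$, completing the proof. I expect no substantive obstacles here; the only care needed is to make clear that we are working with the TEG (i.e.\ $\mDt = \infty$) so that $S(w)$ is not truncated by the adjacency threshold, and to note that for finite $\mDt$ the statement may fail precisely because consecutive events of $w$ could lie more than $\mDt$ apart in time, creating a genuine gap between components that nonetheless share a node.
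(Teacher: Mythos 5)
Your proof is correct and follows essentially the same route as the paper: order the events containing the shared node chronologically, find two consecutive ones lying in different components, and observe that the TEG's $\min S(\cdot)$ construction forces an edge between them, a contradiction. Your version is if anything slightly more explicit about why the edge must exist, and your closing remark about finite \Dt{} matches the paper's own note immediately after its proof.
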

	\begin{proof}
	Suppose $V_1 \cap V_2 \neq \emptyset$ and there exists a node $u \in V_1 \cap V_2$.
	Then there exists a set of events in $C_1$ which contain $u$ with times $t^{(1)}_1, t^{(1)}_2, \dots, t^{(1)}_{n_1}$.
	Similarly there exists a set of events in $C_2$ which contain $u$ with times $t^{(2)}_1, t^{(2)}_2, \dots, t^{(2)}_{n_2}$.
	Assuming that event times are distinct then there exists an ordering of these times.
	Regardless of the relative ordering of these times there must exist a time $t^{(1)}_i$ followed by a time $t^{(2)}_j$ (or vice versa).
	These events share a node and the timing of the events are consecutive meaning the two events are \emph{adjacent}.
	This implies there exists an edge between the two events by definition of the TEG, and $C_1$ and $C_2$ are one component.
	This contradicts the original statement and hence $C_1$ and $C_2$ must contain distinct nodes. 
	\end{proof}

	\noindent Note that this is not true in the \Dt-TEG, even if the components completely overlap in time.

\section{Statistical Properties of the TEG}
	\label{sec:statistical}

	The \Dt-TEG provides a means of assessing the temporal structure of the network.
	In this section we consider the \Dt-TEG as a weighted directed static network where edge weights are the IETs between events.
	This allows us to prune a network based on edge weights (IETs).
	We consider the \emph{weakly connected components} of the TEG where two vertices are in the same component if they are connected on the undirected graph induced by ignoring edge direction.
	Note that these components are equivalent to the maximal \Dt-connected subgraphs defined in \cite{kovanen2011temporal} and describe the connectivity of the events themselves however they cannot describe the connectivity of the nodes in general\footnote{
	One can make a number of statements on the connectivity of the nodes, following edges with certain motif types. 
	For example, a chain of ABBC edges implies a path from the source node of the first event to the target node of the final event.
	However this is difficult to do in fully generality.
	}.
	In fact, finding strongly connected components of nodes in temporal networks has been shown to be an NP-complete problem \cite{nicosia2012components}.

	The \Dt-TEG contains edges $(i,j)$ where $A^\tau_{ij} < \mDt$, using the notation of the edge-labelled TEG from the previous section.
	Let $C_i^{\mDt}$ be the $i$th component of the \Dt-TEG, where components are partially ordered by the number of events they contain such that $|C_0| \ge |C_1| \ge \dots$.
	These components provide a natural decomposition of the temporal network.

	\subsection{Component Sizes, Distribution, and Growth}
	\label{subsec:component_sizes}

	The number and size of components in the \Dt-TEG is dependent on \Dt.
	A natural question is to ask how many components there are in a temporal network and how the events are distributed between them.

	In the limit $\mDt \to 0$ the TEG will be completely disconnected (assuming no two events occur at once), however it is not guaranteed that as $\mDt \to \infty$ that a single component will form.
	In fact in the limit $\mDt \to \infty$ the components of the TEG contain distinct sets of nodes (Lemma~\ref{lem:distinct}) and correspond to the connected components of the time-aggregated temporal network.
	For intermediate \Dt{} the structure of the TEG has a complex dependency on both the connectivity of the nodes (network topology) and the times between subsequent events.

	To characterise the network structure we look at the component size distribution of the \Dt{}-TEG.
	We are also interested in the size of the largest component $|C_0^{\mDt}|$, given by the number of events in that component. 
	In particular, understanding the growth of $|C_0^{\mDt}|$ as a function of \Dt{} gives clues to the temporal structure of the network, e.g. what fraction of the whole network does it contain and for what value of \Dt{} does it reach $95\%$ of its total size?

	As an example, we look at a randomly generated temporal network.
	To generate a temporal network of $N$ nodes with $M$ events with a prescribed IET distribution $X$ we perform the iteration:
	\begin{enumerate}
	\item Increment $t$ to $t+\tau$ where $\tau \sim X$
	\item Draw $u, v$ from $\{1, \dots, N \}$ without replacement
	\item Add event $(u,v,t)$ to the temporal network
	\end{enumerate}
	for each event, after initialising $t=0$.

	\begin{figure}[htb]
	\centering
	\begin{subfigure}{0.49\linewidth}
	\centering
	\includegraphics[width=\linewidth]{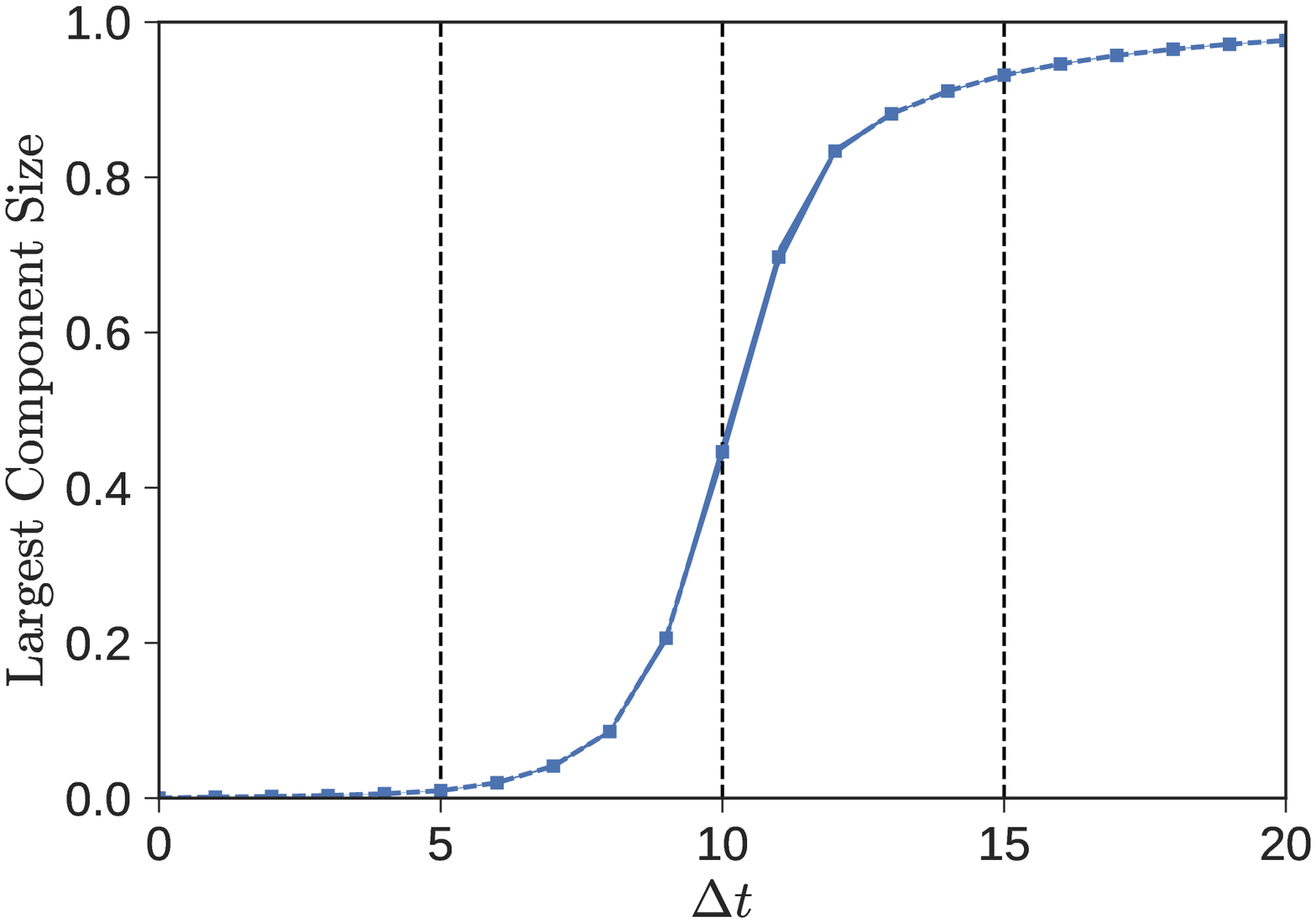}
	\caption{}
	\label{fig:random_components_a}
	\end{subfigure}
	\hfill
	\begin{subfigure}{0.49\linewidth}
	\centering
	\includegraphics[width=\linewidth]{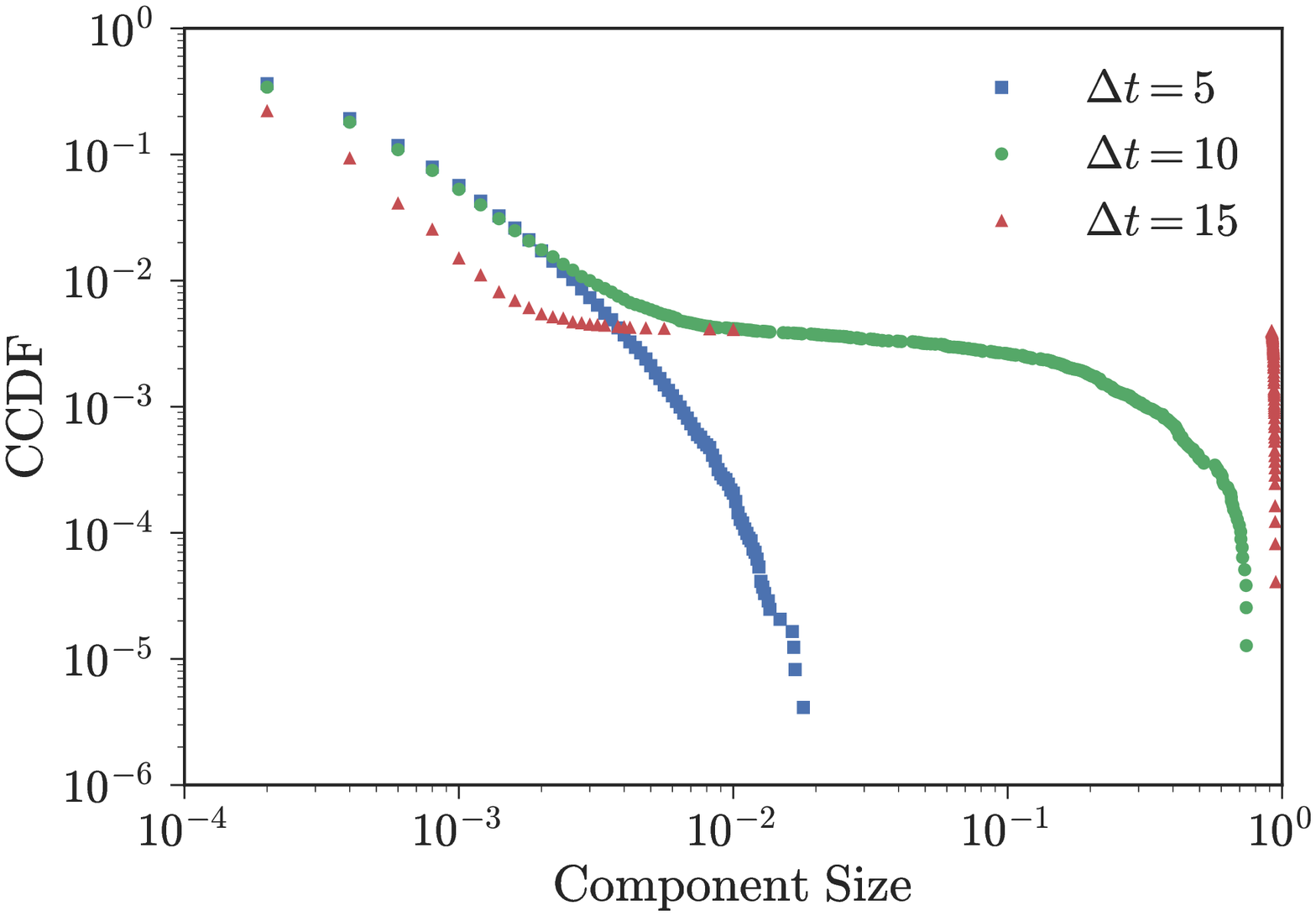}
	\caption{}
	\label{fig:random_components_b}
	\end{subfigure}
	\caption{Temporal component dependence on $\Delta t$. 
	a) The size of the largest temporal component in the \Dt{}-TEG as a fraction of all events for a random temporal network of $200$ nodes and $5000$ events.
	The largest component size has a sigmoidal dependence on \Dt{}, with a sharp transitional period from being only a small fraction of all events ($< 10\%$), to containing almost all events ($> 90\%$).
	b) The corresponding distribution of temporal component sizes for $\mDt = 5,10,15$ constructed using an ensemble of random temporal networks.
	For $\mDt=5$ there are a range of component sizes however non which make up more than $10\%$ of the network.
	For $\mDt=10$ components can take any size.
	For $\mDt=15$ components either make up the majority of the network, or are small isolated components.
	}
	\label{fig:random_components}
	\end{figure}

	In Figure~\ref{fig:random_components} we see the results for a random graph where $N=200$, $M=5000$, and $X$ is power-law distributed with density $P(x;a) = ax^{a-1}$, where $0 \le x \le 1$ and $a=0.2$.
	Results are averaged over an ensemble of $100$ temporal networks.
	The size of the largest component has a sigmoidal dependence on \Dt{}, with only a small fraction of the TEG connected below a characteristic time, and the majority of events connected above (Fig.~\ref{fig:random_components}(a)).
	The average duration of the temporal network is $1000$ meaning that when \Dt{} is only $2\%$ of the network duration, the majority of events are connected.
	Also, due to the random selection of nodes the largest component ultimately contains every event as $\mDt \to \infty$.
	The distribution of temporal components (Fig.~\ref{fig:random_components}(b)) also display this transition.
	For $\mDt = 5$ there is a continuous spectrum of component sizes although the maximum observed size is less than $10\%$ of events.
	The probability of observing components any larger grows exponentially small.
	For $\mDt = 10$ almost all possible component sizes are observed.
	However, above the characteristic time at $\mDt = 15$, the distribution is not continuous.
	Components either are a small fraction of the TEG, or are the majority fraction. 
	There are no components of intermediate size.

	Another aspect of a component size is its growth over time.
	As events are added to the temporal network, they may be added to one of the existing temporal components if they are \Dt{}-connected to an event in those components.
	There may however be events introduced which are \Dt{}-connected to two existing temporal components.
	These events cause the coalescence of the two components.
	By studying the growth of components over time we can observe the events which bring different parts of the network together and understand how the network grows over time.

	One way to visualise the temporal components is through a \emph{temporal barcode}, as seen in Figure~\ref{fig:barcode}.
	This displays the components of the \Dt-TEG, ordered by their size with the largest components at the bottom.
	Within each component, the individual events are plotted by a single vertical line.
	This visualisation allows us to see the duration of each component, its temporal position relative to other components, and the distribution of IETs within the component.
	
	\begin{figure}[h!]
	\centering
	\includegraphics[width=0.9\linewidth]{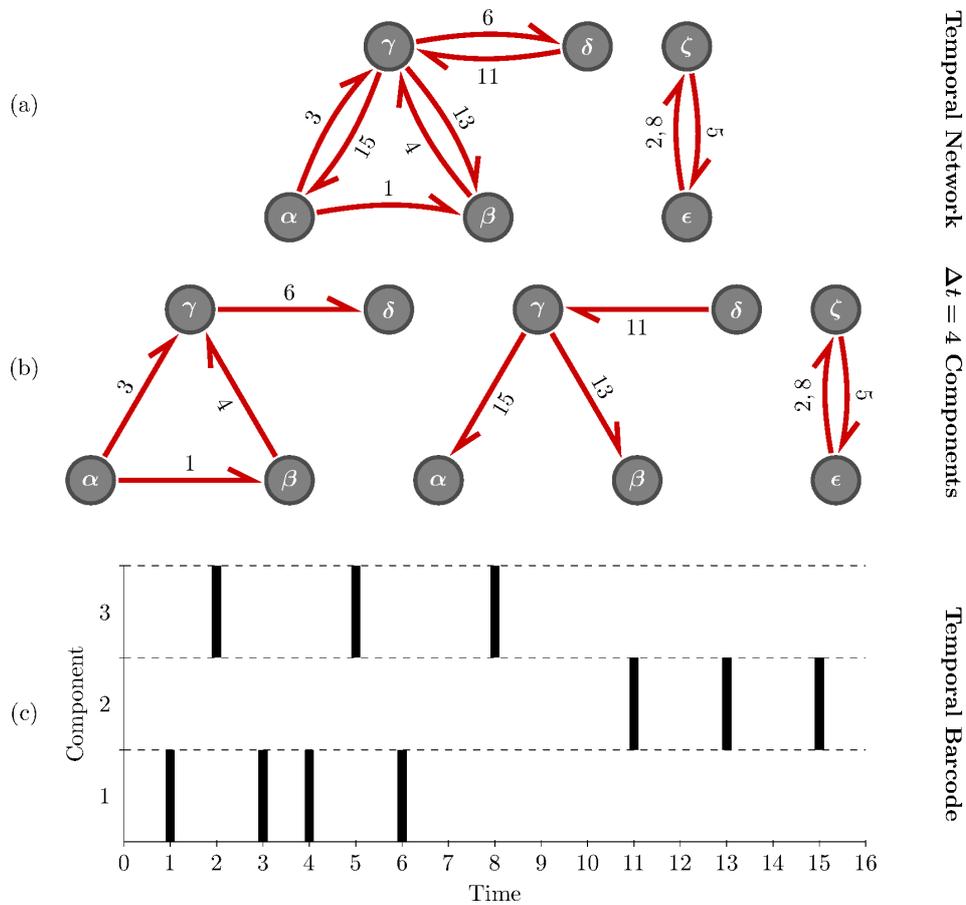}
	\caption{
	Illustration of the temporal barcode associated with a \Dt-TEG. 
	a) A temporal network involving six nodes and nine events. 
	Event labels represent the instantaneous time when that event occurred.
	b) The temporal components of (a) when $\mDt = 4$.
	c) The temporal barcode of (b). 
	There are three different components.
	Events in each component appear as black lines.
	Components 1 and 2 are distinct from 3 as they involve a distinct set of nodes.
	Components 1 and 2 are distinct as there is a gap greater than \Dt{} between activity on the nodes.}
	\label{fig:barcode}
	\end{figure}

	\subsection{Motif and Inter-event Time Distributions}

	We can also consider the IET and motif distributions across the components of the TEG\footnote{
		For consistency with the work of \cite{kovanen2011temporal} we will consider only \emph{valid} motifs and their corresponding IETs.
	}.
	
	The simple temporal networks in Figure~\ref{fig:examples} have trivial motif distributions.
	In Figure~\ref{fig:examples}(a) the only motif present is that of ABAC, reflective of the broadcasting type behaviour of node $\epsilon$ in this instant.
	If we were to consider the distribution of motifs in Figure~\ref{fig:examples}(b) we would see an equal split between the ABAB and ABBA motifs.
	However, considering the motif distribution of each component we see that there are in fact two distinct components containing either the ABAB or ABBA motif only.
	Without a suitable null model for the temporal network, analysing the motif distributions alone cannot give the significance of any observations \cite{shen2002network, milo2002network}, and choosing a null model is non-trivial beyond time-shuffling and time-reversal \cite{artzy2004comment, bajardi2011dynamical}.
	Comparing the temporal network with itself however allows us to gain information about the relative motif counts across components.
	Motif counts can be compared across different node or event types, or even different intervals in the network, however, given the use of temporal components in the calculation of motif counts, comparing the motif distributions across temporal components is a natural way to proceed.

	\begin{figure}[h!]
	\centering
	\begin{subfigure}{\linewidth}
	\includegraphics[width=\linewidth]{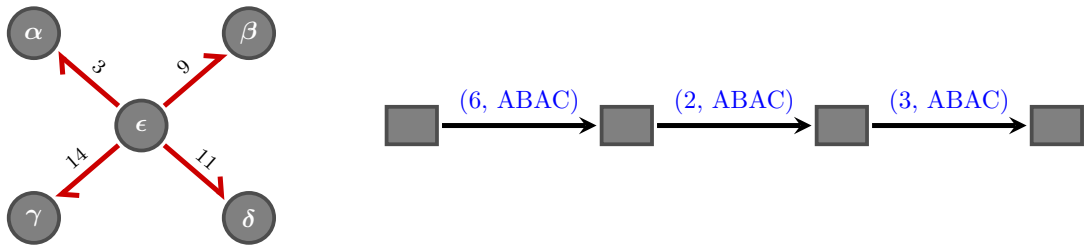}
	\caption{(Left) a temporal network consisting of a central node messaging four other nodes in turn. 
	(Right) the corresponding TEG.}
	\end{subfigure}
	\begin{subfigure}{\linewidth}
	\includegraphics[width=\linewidth]{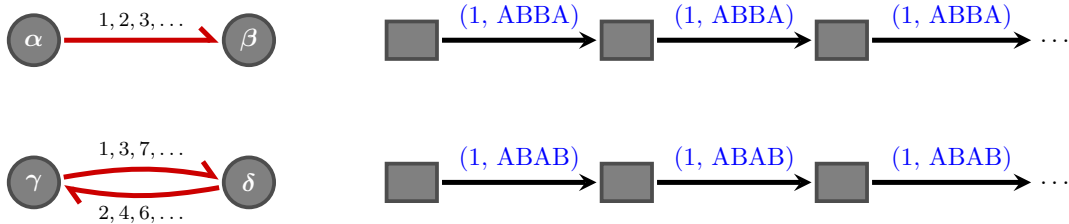}
	\caption{(Left) a temporal network consisting of two pairs of nodes. The bottom pair periodically reciprocate messages in turn, whereas in the top pair all messages are sent in one direction. 
	(Right) the corresponding TEG.}
	\end{subfigure}
	\caption{Examples of temporal networks and their temporal event graphs.}
	\label{fig:examples}
	\end{figure}

	Returning to the random temporal network example of Section~\ref{subsec:component_sizes} it can be shown that the motif distribution is given by 
	\begin{align}
	\Pr(x) = \begin{cases}
	 \frac{1}{4N-6} &\mbox{ for } x \in \{{\rm ABAB, ABBA}\} \\
	 \frac{N-2}{4N-6} &\mbox{ for } x \in \{{\rm ABBC, ABCB, ABAC, ABCA}\}.
	\end{cases}
	\label{eqn:motif_probabilities}
	\end{align}
	So, as $N \to \infty$, the ABAB and ABBA motifs are less likely to be observed and all other motifs are observed with equal probability.
	This illustrates why the random temporal network model is an unsuitable null model for social systems where one expects a degree of reciprocity. 

	Coupled to each motif, each edge in the TEG carries the IET between the two connected events.
	This is the time between events which an individual node participates (inward and outward activity).
	This time differs from what has been previously studied in temporal networks - usually the global time between events for the entire network, or the outward activity of an individual node \cite{jo2012circadian,barabasi2005origin,holme2013temporal}.
	We may also partition the IETs based on the motif formed between the two events.
	For example we can calculate $\Pr(t | m)$ with $m \in \mathbb{M}$, the probability of observing an IET of $t$ given the motif formed between the two events is $m$.
	By considering these conditional probabilities we can uncover more information about the process that generated the temporal network then if we considered the IETs and motifs in isolation.
	
	In Figure~\ref{fig:iet_distribution}(a) we plot the CCDF of the IETs of the TEG.
	For real data, this distribution is a complex function of node interactivity and activity patterns. 
	For the random temporal network however the distribution is approximately geometric.
	This is due to each node having a constant probability of being in an event at each iteration.
	The distribution would be exactly geometric if $X$ was deterministic.
	In Figure~\ref{fig:iet_distribution}(b) we see that the motifs with two nodes (ABAB and ABBA) occur faster on average than the motifs containing three nodes.
	This make sense as in the random temporal network model the three node motifs are more likely to occur and so the two node motifs must occur quickly or not at all.  
	
	\begin{figure}[h!]
	\centering
	\begin{subfigure}{0.49\linewidth}
	\centering
	\includegraphics[width=\linewidth]{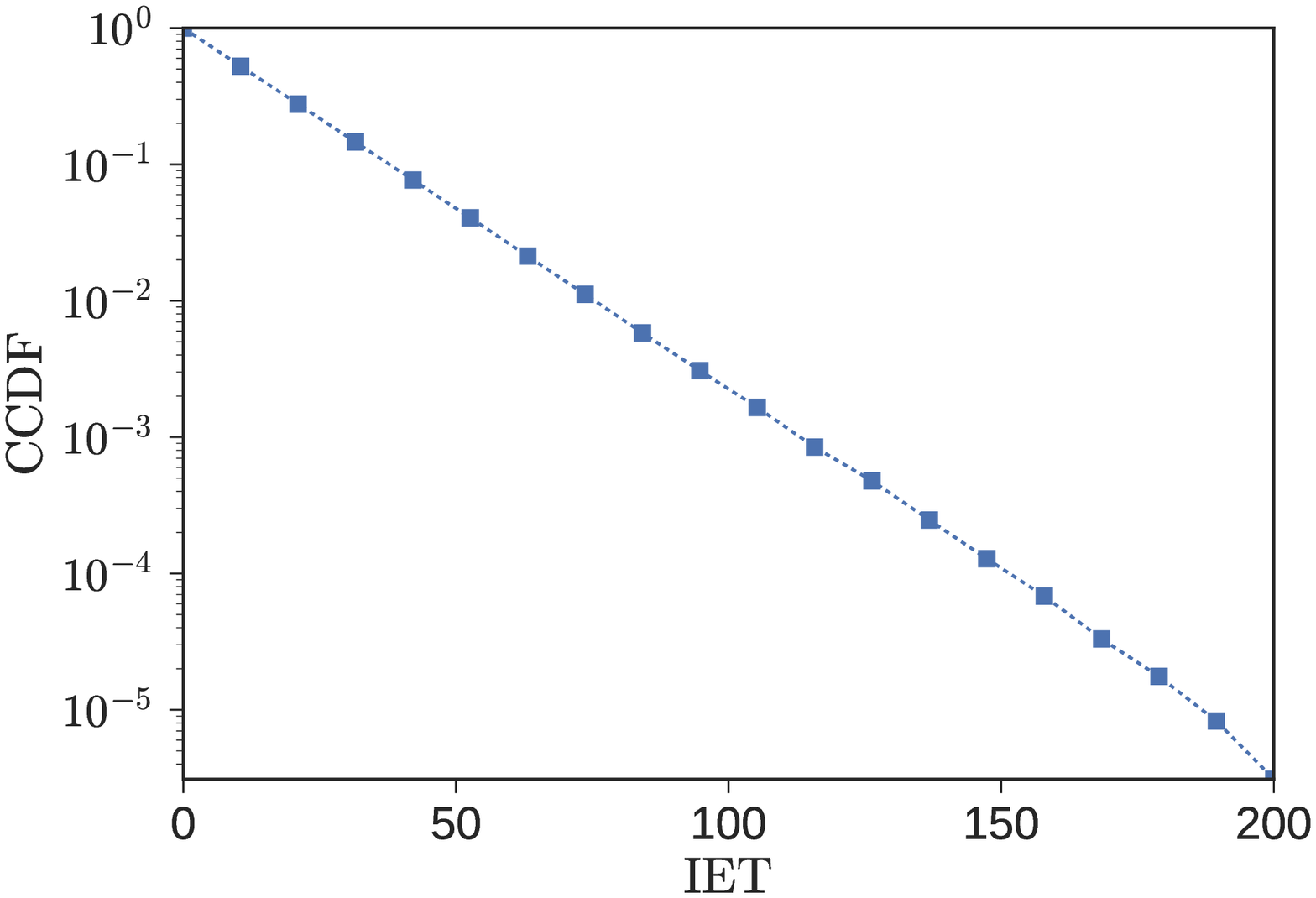}
	\caption{}
	\end{subfigure}
	\hfill
	\begin{subfigure}{0.49\linewidth}
	\centering
	\includegraphics[width=\linewidth]{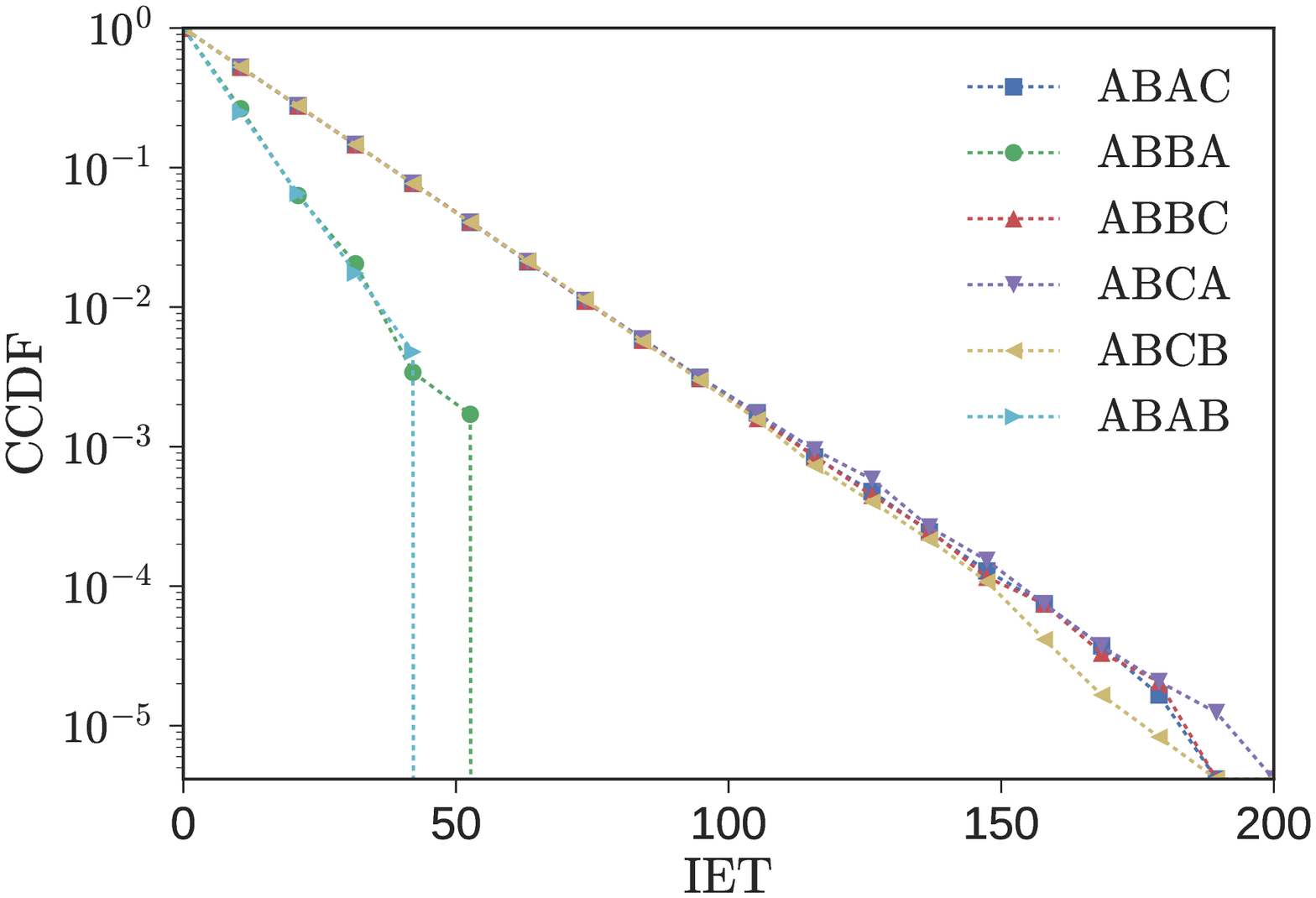}
	\caption{}
	\end{subfigure}
	\caption{The IET distributions for the random temporal network. 
	a) the CCDF for the IET distribution of the TEG, i.e., the time between consecutive events for each node.
	b) the CCDFs of the IET distributions, conditional on the motif formed.
	The motifs containing two nodes have on average a smaller IET than motifs with three nodes.}
	\label{fig:iet_distribution}
	\end{figure}

	\subsubsection{Entropic Measures}

	There are six possible two-event motifs. 
	However with generalisations of temporal networks such as allowing coloured edges or nodes there can be many more possible motifs.
	The full distribution of motifs can therefore be difficult to analyse and it is instead useful to use a single measure to capture the diversity (or predictability) of the motifs within a component.
	For this purpose we use Shannon's information entropy, which for a distribution $\{p\}$ is given by
	\begin{align}
	S\left(\{p\}\right) = - \sum_{p_i \in \{p\}} p_i \log_2 p_i.
	\end{align}
	This takes a minimal value of zero when $p_k=1$ and $p_i=0$ for all $i \neq k$, corresponding to a fully predictable system, and a maximal value of $-\log_2 p$ when all $p_i=p$, corresponding to uniform randomness.
	When there are six possible motifs the maximum entropy is $-\log_2 \left(\frac{1}{6}\right) \approx 2.58$.
	In both the examples in Figure~\ref{fig:examples} the entropy of each component is identically zero as each component consists of a single motif.
	In this sense, all these components are completely predictable.
	For the random temporal network there are four possible motifs occurring with equal probability in the large $N$ limit.
	The entropy is therefore $-\log_2\left(\frac{1}{4}\right) = 2$.
	Therefore, from a motif based viewpoint, the random temporal network is not as random as possible.

	Likewise, although less simply, we can compute the entropy of the IET distribution.
	As the IET is a continuous variable we instead use the cumulative residual entropy (CRE) \cite{drissi2008generalized,rao2004cumulative} defined as
	\begin{align}
	S_{\rm CRE}\left(X\right) = - \int \Pr(X>x) \log_2 \left[ \Pr(X>x) \right] \mathop{dx},
	\end{align}
	where $X$ is the IET distribution.
	The CRE shares many features with Shannon entropy (the CRE of a delta function is $0$, for example), and provides a consistent means to characterise the diversity in the IET distribution\footnote{
		The variance can also be considered, however this performs poorly as a measure of diversity on distributions not well described by their mean, e.g. a bimodal distribution. 
	}.

	\subsection{Induced Aggregate Networks}

	The \Dt{}-TEG provides a convenient way to decompose a larger temporal network, however being event-centric it can be difficult to assess the connectivity of the nodes within each component.
	This information can be extracted easily however by considering the static aggregation of the temporal component.
	The static network can then be analysed using standard methods to find quantities of interest.
	In particular, we will be interested in the number of nodes, edge density, the fraction of reciprocated edges, and network diameter.

	Studying the components of the decomposed network offers the advantage of understanding the role of nodes within a particular context, as opposed to consideration of the static graph of the full temporal network, which may be dense or noisy, or of fixed intervals which may dissect patterns of behaviour.
	Partitioning the random temporal network into intervals of fixed width results in a series of Erd\H{o}s-R\'enyi (ER) static networks with edge forming parameter $p$ dependent on the number of events in each partition.  
	This gives the `temporal ER network' as described in \cite{scellato2013evaluating}.
	The aggregated networks of the TEG components by contrast are not in the class of ER graphs and their properties are yet to be determined.

\section{Application to Data}
	\label{sec:data}

	In this section we consider the social network of students from University of California, Irvine (UCI) \cite{panzarasa2009patterns,opsahl2009clustering}\footnote{
		Data available at: \url{http://snap.stanford.edu/data/CollegeMsg.html}
	}.
	The social network was created to sustain social interaction among students and to help enlarge their social circles.
	Students created a profile which contained a short biography and demographic characteristics.
	Students could then view or message any other student in the network.

	The dataset covers a period of six months from April to October 2004, over this time $59,835$ messages were sent between $1,899$ users\footnote{
		Special users who broadcast messages to the entire network were removed.	
	}.
	The resulting aggregate network has $20,296$ directed edges, meaning the network is sparse ($0.56\%$ of all possible edges are present).

	To get a first impression of the network structure we look at the temporal barcode of the $10$min-TEG on over a short period of the data (in this case twelve hours).
	The TEG consists of multiple large components which occur over a duration of over an hour.
	Some of these components overlap in time suggesting that distinct conversations were occurring.
	Over the same time period there a number of smaller components of interest.
	For example, component $20$ consists of two users exchanging messages back and fourth over a period of $30$ minutes.
	Interestingly one of these users has a response time significantly shorter than the other.
	Component $19$ is a more complicated mix of broadcasting nodes which then converse amongst themselves.
	Despite containing more events than component $20$, this component only lasts for a duration of $9$ minutes.
	We can quantify these observations by examining the motif and IET distributions across each component.

	\begin{figure}[htb]
	\centering
	\includegraphics[width=\linewidth]{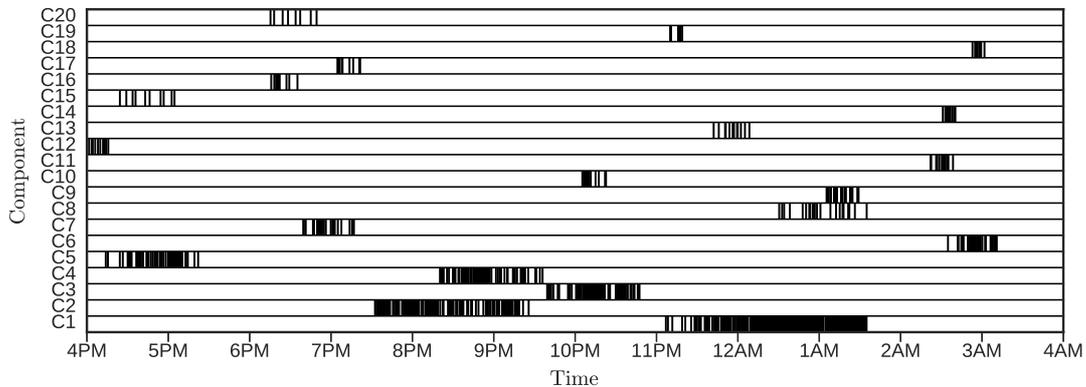}
	\caption{
	The top $20$ components of the $10$min-TEG for a twelve hour period beginning on $10$th May $2004$. 
	As with Figure~\ref{fig:barcode}, each vertical black line represents a single event.
	The $10$min-TEG consists of a number of large components (with some overlap) which occur in the mid to late evening.
	There are also many smaller components occurring at the same time with distinctive IETs.
	}
	\label{fig:data_barcode}
	\end{figure}

	We now systematically study the structural dependence of the TEG on the parameter \Dt{}.
	In Figure~\ref{fig:data_components}(a) we plot the size of the largest component (as a fraction of all events).
	As we increase \Dt{}, the largest component increases in a step-wise fashion, indicating that there are distinct timescales in the connectivity of the TEG.
	Jumps in connectivity around $\mDt = 6$ could be attributed to messages which occur just before night and then first thing in the morning.
	We also see that as $\mDt \to \infty$ the largest component contains all but four events (out $59,835$).
	The aggregate static network contains only four components with the activity of the smaller three consisting of only one or two events.
	\begin{figure}[htb]
	\centering
	\begin{subfigure}{0.49\linewidth}
	\centering
	\includegraphics[width=\linewidth]{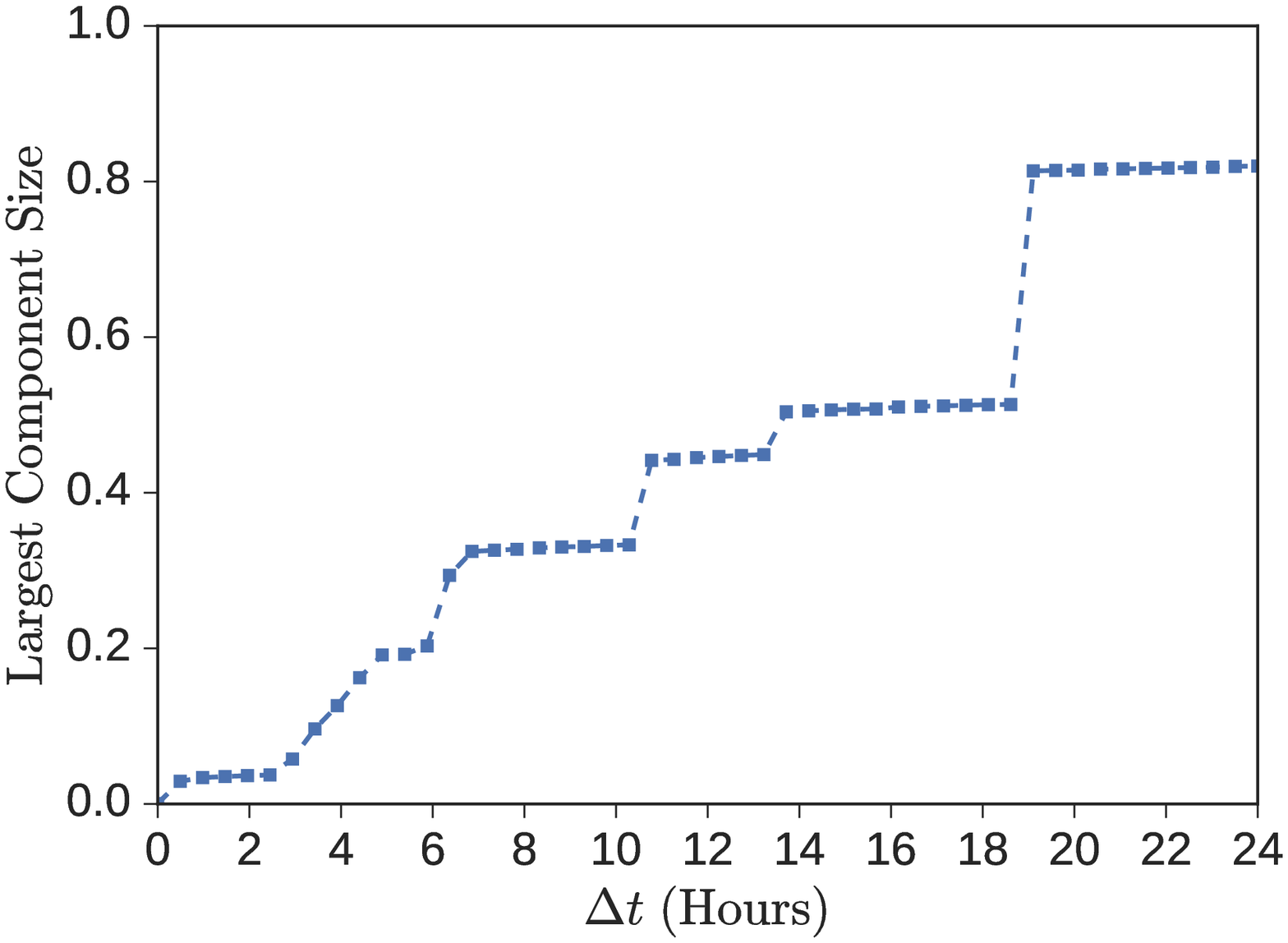}
	\caption{}
	\end{subfigure}
	\hfill
	\begin{subfigure}{0.49\linewidth}
	\centering
	\includegraphics[width=\linewidth]{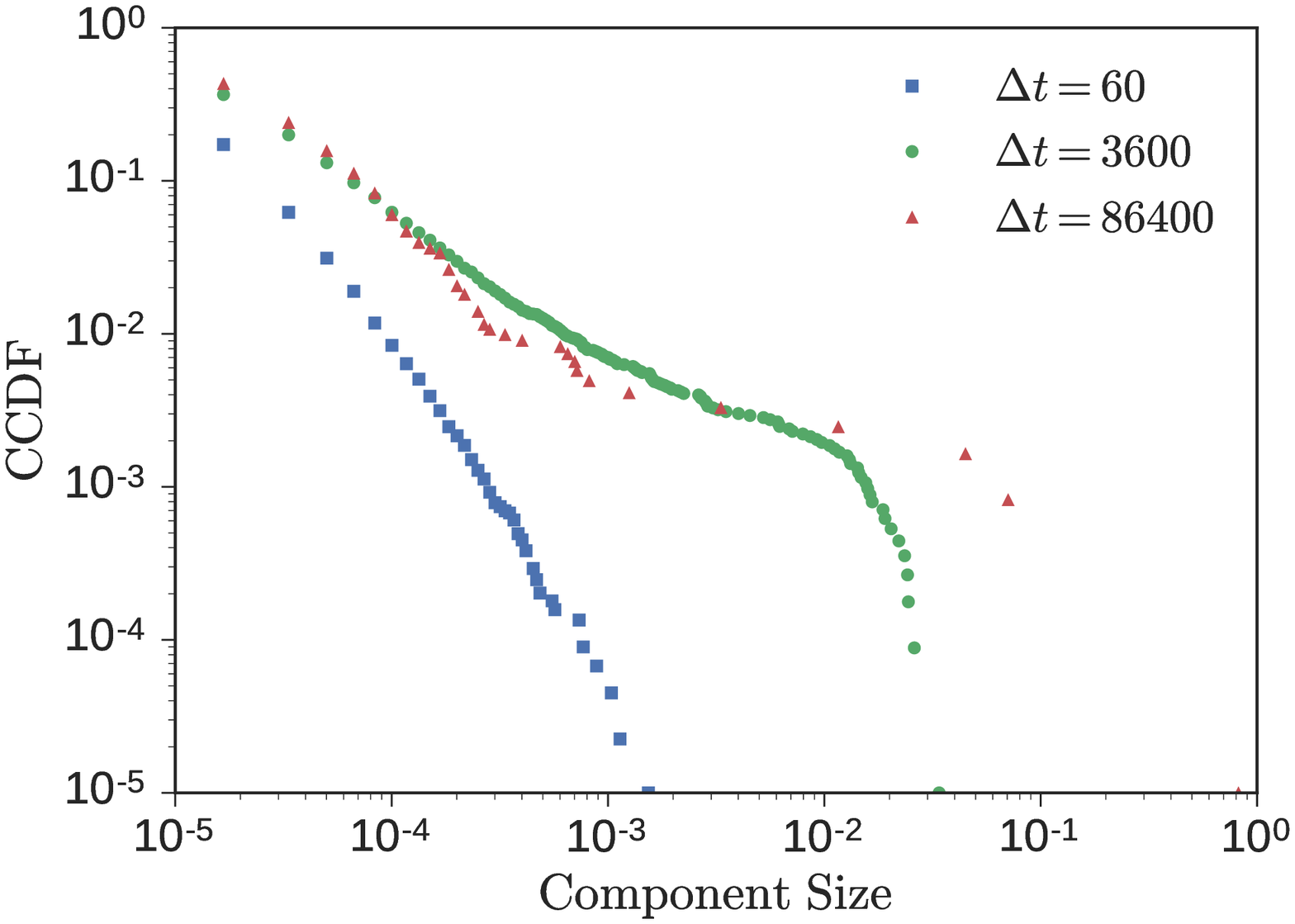}
	\caption{}
	\end{subfigure}
	\caption{Temporal component dependence on $\Delta t$. 
	(a) The size of the largest temporal component in the \Dt{}-TEG as a fraction of all events for the UCI network.
	(b) The corresponding distribution of temporal component sizes for $\mDt = 60,3600,86400$ seconds (corresponding to $1$ minute, $1$ hour and $1$ day). 
	For $\mDt=60$ the distribution of component sizes are well represented by a power-law distribution.
	For $\mDt=3600$ we see the onset of the giant component.
	For $\mDt=86400$ the largest component is over $80\%$ of all events and there are very few moderately sized components.
	The remaining components are all relatively small.
	}
	\label{fig:data_components}
	\end{figure}
	Furthermore the distribution of component sizes for varying \Dt{} is given in Figure~\ref{fig:data_components}(b).
	The component size distributions mirror that of the random temporal network with the formation of a giant component as \Dt{} increases.
	However in this case the transition from multiple small components to a giant component is less abrupt.
	For $\mDt=60$ (blue squares) and for other small values of \Dt{} the distribution of component sizes can be accurately modelled by power-law distribution.

	For the remainder of this section we fix $\mDt = 3600s$ ($1$ hour).
	The choice of \Dt{} in this case (as in previous work) is chosen arbitrarily, although as Figure~\ref{fig:data_components} confirms we are in a regime where the largest component is no larger than $5\%$ of the total number of events.
	We first calculate the distribution of two-event motifs across the entire network and compare this to the average distribution over an ensemble of $200$ time-shuffled versions of the network\footnote{
		Node pairs are kept the same however times are shuffled between events.
	} (Table~\ref{tab:motif_dist}).
	The distribution of motifs in the true network is differs significantly from the random ensemble (with a z-score of $46$).

	\begin{table}[htb]
	\centering
	\setlength\tabcolsep{10pt}
	\begin{tabular}{ccccccc} \toprule[1pt]
	\textbf{Network} & ABAB & ABBA & ABAC & ABCA & ABBC & ABCB \\ \midrule[0.5pt]
	{UCI} & $7.0 \times 10^{-2}$ & $0.14$ & $0.27$ & $0.15$ & $0.11$ & $0.25$ \\
	{Shuffled} & $9.0 \times 10^{-3}$ & $7.6 \times 10^{-3}$ & $0.27$ & $0.22$ & $0.22$ & $0.26$ \\ \bottomrule[0.5pt]
	\end{tabular}
	\caption{Motif distribution for the UCI temporal network.}
	\label{tab:motif_dist}
	\end{table}

	\begin{figure}[h!]
	\centering
	\includegraphics[width=0.7\linewidth]{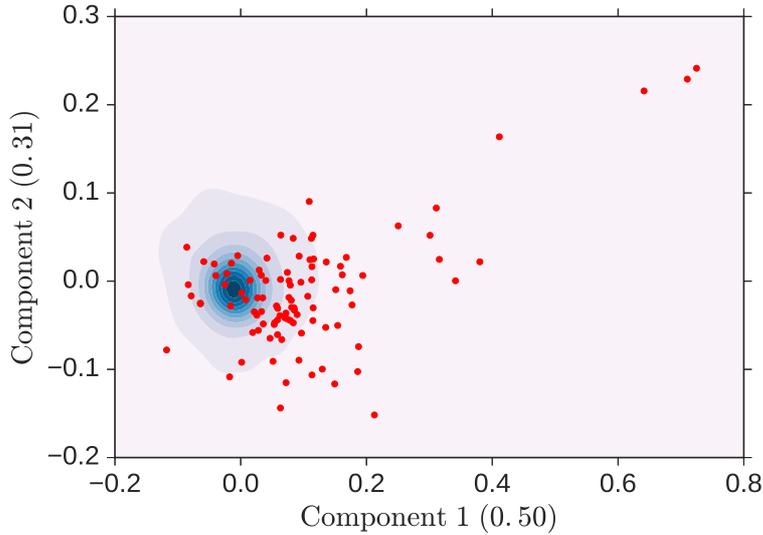}
	\caption{
		The motif distribution of the largest $100$ components of the $1$h-TEG, reduced to two dimensions using principal component analysis (red dots).
		Behind, a kernel density estimate of the motif distribution from the largest $100$ components of $200$ time-shuffled networks (blue shading).
		Darker areas have higher probability.
		Here we can see that the average motif distribution differs from the randomised networks, and that also there is a larger variance among the components compared to the randomised networks (see text).
	}
	\label{fig:data_motif_distribution}
	\end{figure}

	By considering the TEG structure (the temporal components in particular) we can decompose the motif counts to the components from which they originate (see Figure~\ref{fig:data_motif_distribution}).
	There are two observations to note.
	Firstly, the majority of the TEG components (red dots) lie outside the bulk of the time-shuffled component distribution (blue shading, darker being higher density) which confirms our earlier observation that the difference in average is significant.
	Secondly, the diversity in the motif counts of each component is greater than in the randomised networks (a z-score of $12$ when considering the average nearest-neighbour distance).
	The largest components have distributions close to the average for the entire network however there are certain components where one motif is more greatly expressed than the others.
	For example, the three components in the top right of Figure~\ref{fig:data_motif_distribution} the ABAC motif makes up over $95\%$ of all observed motifs in these components.
	This highlights that the temporal network is not simply made up of homogeneous groups of nodes and activity but instead consists of distinct heterogeneous components.
	The largest components can be decomposed by further reducing \Dt{} which may isolate more diverse component structures.

	Another benefit of studying IETs and motifs in tandem is that we can consider the IET distribution conditioned on the motif formed between the two events.
	In Figure~\ref{fig:data_iet_distribution}(a) we see the IET distribution across the whole TEG (blue $\square$), compared to the same distribution over an ensemble of $100$ time-shuffled versions of the network (dashed).
	The time-shuffled CCDF is well modelled by a log-normal distribution, however in the real network smaller IETs are overrepresented and the log-normal fit is poor.
	\begin{figure}[h!]
	\centering
	\begin{subfigure}{0.49\linewidth}
	\centering
	\includegraphics[width=\linewidth]{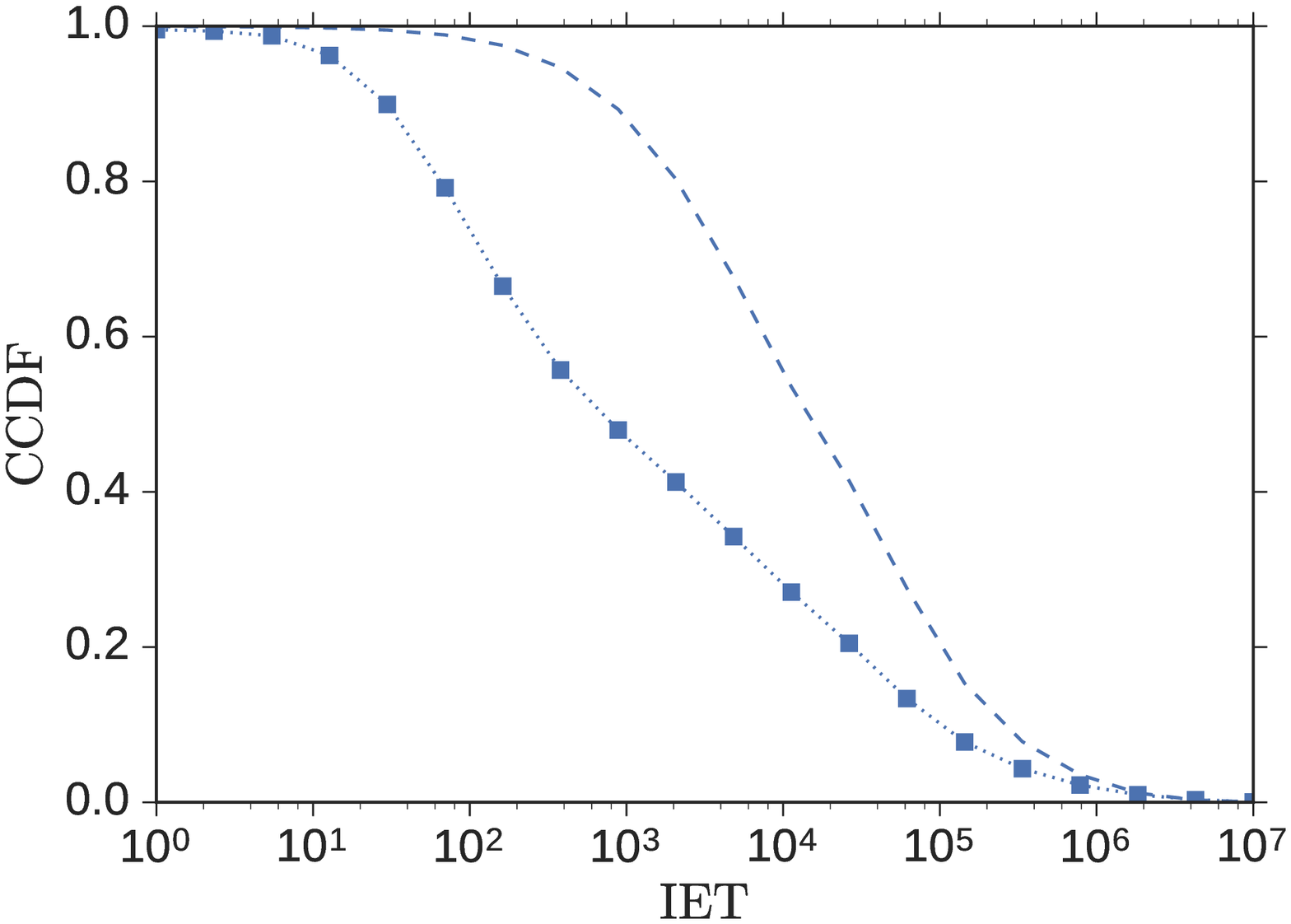}
	\caption{}
	\end{subfigure}
	\hfill
	\begin{subfigure}{0.49\linewidth}
	\centering
	\includegraphics[width=\linewidth]{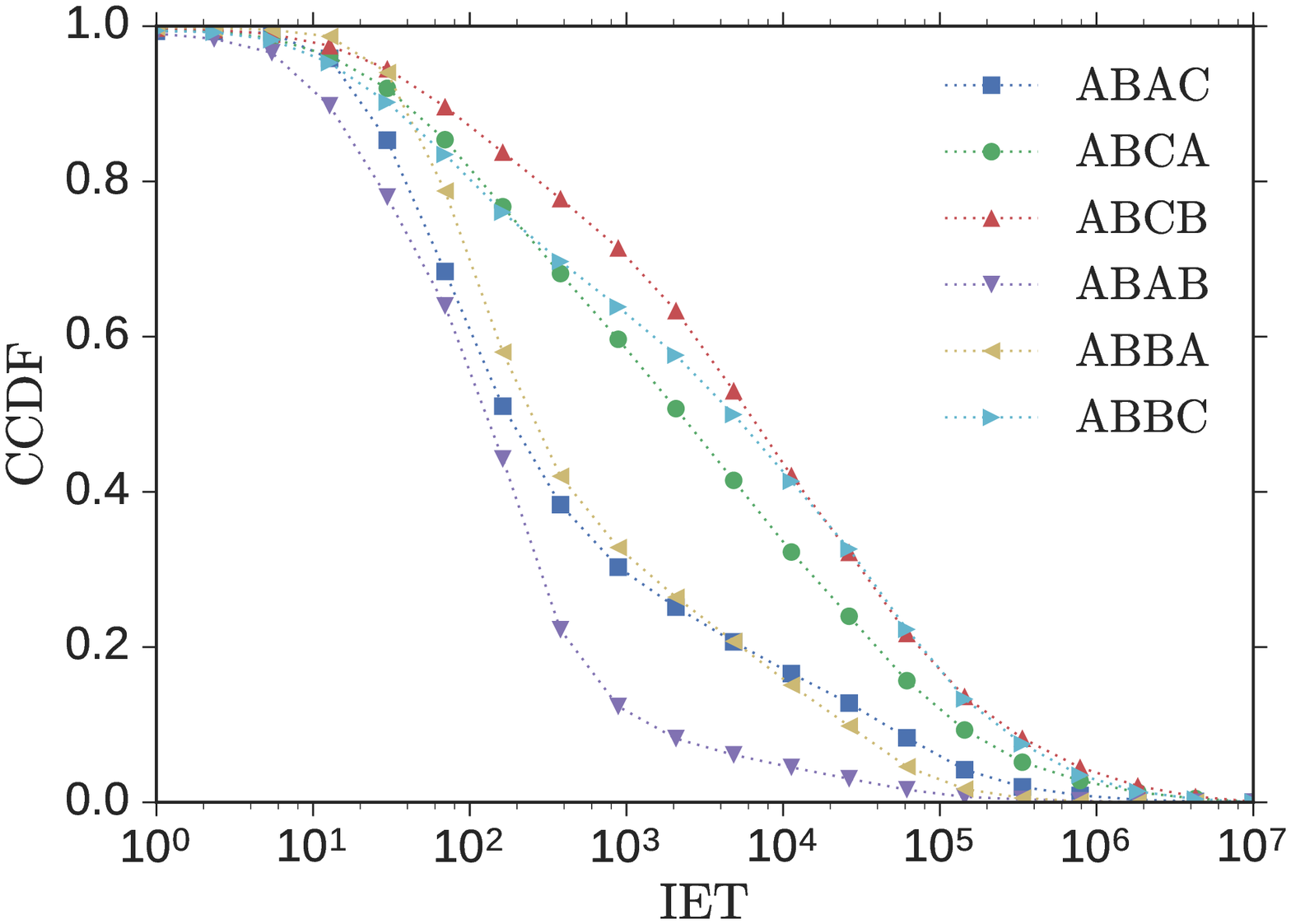}
	\caption{}
	\end{subfigure}
	\caption{The IET distributions for the UCI network. 
	(Left) the CCDF for the IET distribution of the TEG (blue $\square$), i.e., the time between consecutive events for each node.
	The IET CCDF of an ensemble of $100$ time-shuffled versions of the temporal network is given by the dashed line.
	(Right) the CCDFs of the IET distributions, conditional on the motif formed.
	The fastest appearing motif on average is that of ABAB.
	The slowest appearing motif is the ABCB motif.
	The time-shuffled ensemble CCDFs for each motif are all roughly identical (not shown).  
	}
	\label{fig:data_iet_distribution}
	\end{figure}
	By considering the conditional IET distributions, $\Pr(t|m)$, we see in Figure~\ref{fig:data_iet_distribution}(b) that on average the ABAB motif occurs on average much quicker than the other five motifs, and all motif IETs are smaller than the random ensemble equivalent.
	This is most likely due to users of the network breaking their messages and sending the same information over multiple messages.

\section{Conclusions}
	\label{sec:conclusions}

	In this article we introduced the temporal event graph, a static representation of a temporal network.
	Furthermore we show that the TEG can uniquely define a temporal network.
	In this sense we are able to fully describe a temporal network purely in terms of the inter-event times and two-event motifs.
	In Section~\ref{sec:data} we showed that the TEG provided a natural decomposition of the temporal network and that by considering the inter-event times conditioned on motif type we were able to uncover different timescales for behaviour that would not be visible when considering both properties independently.
	
	It is also worth noting that the TEG is not limited to simple event tuples $(u,v,t)$ but can be generalised in the same fashion as temporal motifs to include coloured events or nodes, e.g. to distinguish phone calls and SMS messages in communication networks.
	Provided a meaningful relationship between events exists then in fact any such sequence of time-stamped events can be represented by a TEG.
	The calculation of the TEG is also computationally efficient.
	Building the TEG from a temporal network can be done in time which scales linearly with the number of events in the network.
	This means that this type of analysis is well suited to large datasets such as those extracted from social or telecommunication networks.
	It also allows the TEG to be constructed in real-time (provided the data is received sequentially) and so provides a method to quickly assess behavioural changes in the network.

	While many details of the TEG are yet to be explored, our study demonstrates how a temporal network can be represented exactly as a static network and be classified using event relationships.
	This finding provides an initial, but significant step towards the systematic investigation of temporal networks and their generating mechanisms.

\section*{Funding}

This work was supported by an Engineering and Physical Sciences Research Council CASE Studentship [Grant number EP/L50550X/1].
Funding from Bloom Agency\footnote{
	\href{http://www.bloomagency.co.uk/}{bloomagency.co.uk}
} 
also kindly acknowledged.

\section*{Acknowledgements}

We thank Jonathan Ward for the careful review of this manuscript and Alastair Rucklidge, Mauro Mobilia, Mary Aprahamian, and Charles Taylor for useful discussion and insightful offerings.

\end{document}